\documentclass[12pt]{amsart}
\usepackage{amssymb,amsmath,amsfonts}
\usepackage{color}
\usepackage{mathrsfs}

\numberwithin{equation}{section}

\def\ca{{\mathcal A}}
\def\cb{{\mathcal B}}
\def\cc{{\mathcal C}}
\def\cd{{\mathcal D}}
\def\ce{{\mathcal E}}
\def\cf{{\mathcal F}}

\def\cam{{\mathcal M}}
\def\cn{{\mathcal N}}
\def\co{{\mathcal O}}

\def\gd{{\mathfrak D}}

\def\bc{{\mathbb C}}

\def\br{{\mathbb R}}

\def\bz{{\mathbb Z}}

\def\a{\alpha}
\def\b{\beta}
\def\g{\gamma}
\def\G{\Gamma}
\def\d{\delta}
\def\D{\Delta}

\def\l{\lambda}

\def\k{\kappa}
\def\m{\mu}
\def\p{\pi}
\def\n{\nu}
\def\r{\rho}
\def\s{\sigma}

\def\t{\tau}
\def\f{\varphi}

\def\w{\omega}
\def\Om{\Omega}

\newtheorem{thm}{Theorem}[section]
\newtheorem{lem}[thm]{Lemma}
\newtheorem{cor}[thm]{Corollary}
\newtheorem{prop}[thm]{Proposition}

\newtheorem{rem}[thm]{Remark}

\DeclareMathOperator{\Tr}{Tr}
\newcommand{\one}{\mathbf 1}

\def\sA{{\mathscr A}}
\def\sB{{\mathscr B}}
\def\sR{{\mathscr R}}
\def\sX{{\mathscr X}}

\def\rK{{\rm K}} \def\rd{{\rm d}} \def\rp{{\rm p}}

\def\ov{\overline} 

\def\onwl{\operatornamewithlimits}
\def\Llra{\Longleftrightarrow}
\def\Lra{\Longrightarrow}

\begin{document}

\title[Quasi-Product States and Factor Types]{Quasi-Product States and Factor Types for the One-Dimensional Hard-Core Model}

\author{Farrukh Mukhamedov}
\address{Farrukh Mukhamedov\\
Department of Mathematical Sciences\\
United Arab Emirates University, Al Ain, United Arab Emirates}
\email{farrukh.m@uaeu.ac.ae, far75m@gmail.com}

\author{Yuri Suhov}
\address{Yuri Suhov\\
Statistical Laboratory, DPMMS\\
University of Cambridge\\
Cambridge CB3 0WB, UK; {Math Department}\\ PennState University\\ State College
PA 16802 USA}
\email{yms@statslab.cam.ac.uk, ims14@psu.edu}

\begin{abstract}
We study a quasi-product state associated with the one-dimensional hard-core Gibbs measure. After coding the model by the topological Markov chain, we construct the standard path $AF$-algebra of admissible hard-core words and show that the stationary Markov measure induces on it a faithful diagonal state in the sense of Evans. We then analyze the von Neumann algebra generated by the corresponding GNS representation. The resulting algebra is a hyperfinite factor, and its type is determined by the single parameter
\(\kappa=q/p^2,\) where \(\begin{pmatrix}p&q\\ 1&0\end{pmatrix}\)
is the transition matrix of the Markov chain. More precisely, the factor is of type $\mathrm{II}_1$ when $\kappa =1$, and of type $\mathrm{III}_{\l}$ with \(
\l=\min\{\kappa,\kappa^{-1}\}\)  for $\k\neq 1$. We also specify the centralizer and
the weight flow for the resulting factor.  \end{abstract}

\subjclass[2020]{46L53, 46L55, 46L60, 82B20}
\keywords{quasi-product state, hard-core model, path $AF$-algebra, Markov measure, hyperfinite factor}

\maketitle

\section{Introduction}

Quantum Markov states form a natural meeting point of operator algebras, probability,
and statistical mechanics. Since the pioneering work of Accardi and his collaborators,
noncommutative analogues of classical Markov chains have been studied from several
points of view: the structure of states on quasi-local $C^*$-algebras, modular theory,
and applications to concrete models from quantum statistical mechanics and quantum
information theory; see, for example,
\cite{Ac,AFi,AFr,ALi,ALV,BKJW,FNW}. One of the central problems in this circle
of ideas is the determination of the type of the von Neumann algebra generated by
the GNS representation of a given state.

A basic benchmark is provided by Powers' classical construction \cite{P}. He considered
product states on UHF algebras, which can be interpreted as free quantum spin lattice
systems with two local states at each site having probabilities $p$ and $q=1-p$. In the
non-tracial case, after ordering the probabilities so that $0<p\le q<1$, the corresponding
factor is of type $\mathrm{III}_{p/q}$. More extensive analyses of product states and their
modular theory can be found in \cite{BR1,BR2}. From the physical point of view, these
examples correspond to free systems, that is, systems without a genuine interaction between
neighbouring sites.

Beyond the product case, two closely related lines of work are relevant for the
present paper. On the one hand, Krieger's work showed that factor-theoretic invariants of
nonsingular dynamics, and in particular constructions based on Markov data, are closely tied
to ratio sets and type $\mathrm{III}$ phenomena \cite{Kr1,Kr2}. On the other hand, Evans
introduced quasi-product measures on compact totally disconnected path spaces and the
corresponding quasi-product states on associated unital AF-algebras and Cuntz--Krieger
algebras \cite{E,O}. The present paper lies naturally at the intersection of these viewpoints:
we construct a faithful quasi-product state arising from a stationary Markov
measure on an admissible path space generated by a nearest-neighbor hard-core interaction and  determine the type of the resulting factor by computing the ratio set of
the tail equivalence relation.

This naturally leads to the question of what happens to general systems
with interactions.
 According to Blekher's commentary \cite{Bl}, one may expect that the
presence of a nontrivial interaction should drive the associated factor toward type
$\mathrm{III}_1$. One of the main messages of the present paper is that this intuition may be too restrictive. It turns out that for the hard-core model under consideration the
resulting factor is still of type $\mathrm{III}_{\lambda}$ for an explicit $\lambda\in(0,1)$,
except for one exceptional case where it is of type $\mathrm{II}_1$. In other words,
a nontrivial interaction does not, by itself, force type $\mathrm{III}_1$. Nevertheless, this area needs further studies for making more detailed conclusions.

It is worth noting that the case of a locally faithful Markov state has been treated in the literature.
In Ref. \cite{FM} it was proved that non-homogeneous quantum Markov states
are diagonalizable and showed that, in the translation-invariant or periodic locally faithful setting, the type of the generated von Neumann factor is determined by the spectrum
of the fundamental two-site block of the associated Hamiltonian (see also \cite{M,Oh}). In
particular, when the underlying stochastic matrix has no zero entries, so that the corresponding diagonal lifting gives a locally faithful Markov state, the factor-type problem falls within that
framework. The hard-core model studied in the present paper lies outside this
situation: its transition matrix contains a zero entry, and the diagonal state on the full
UHF algebra  is not locally faithful. It turns out that this fact affects the type of the resulting
factor.

The model under consideration is the one-dimensional hard-core model on $\bz$. A
configuration is a map $\eta:\bz\to\{0,1\}$ satisfying the exclusion rule
\[\eta(k)\eta(k+1)=0,\qquad k\in\bz,\]
so two neighbouring sites cannot be simultaneously occupied. For fixed fugacity, the Gibbs
measure of this model is a stationary one-dimensional Markov measure; see \cite{BHW,G,MRS05,SR}. We
use the coding in which the vacant state is labelled by $0$ and the occupied state by $1$.
The hard-core condition excludes the word $11$, and
the model is encoded by the topological Markov chain with states $0$ and $1$
and the adjacency matrix
$\begin{matrix}0\\ 1\end{matrix}\left[\begin{matrix}1&1\\ 1&0\end{matrix}\right]$.
The corresponding stationary Markov measure $\m$ is determined by the
transition matrix \ $\begin{matrix}0\\ 1\end{matrix}\left[
\begin{matrix}p&q\\ 1&0\end{matrix}\right]$, with $p,q\in(0,1)$, $p+q=1$,
and has the invariant distribution $\big\{1/(1+q),q/(1+q)\big\}$.

If one starts from the full two-sided UHF algebra and defines the diagonal state coming
from the Gibbs measure, the state is not faithful, because every cylinder containing the
forbidden pattern $11$ has zero weight. For the modular theory this is not the right
ambient algebra; the correct object is instead the path AF-algebra built from
admissible words. In this framework the forbidden pattern is excluded at the algebraic level,
and the one-sided stationary Markov measure induces a faithful diagonal state. This places
the construction naturally in the setting of quasi-product states introduced by Evans \cite{E}.

More precisely, for each $n\geq 0$ we consider the finite-dimensional algebra
\[\ca_n=\onwl{\oplus}\limits_{j=0,1}\cb\bigl(\ell^2(W_n(j))\bigr),\]
where $W_n(j)$ is the set of admissible words of length $n+1$ ending at the symbol $j$. Here $\cb\bigl(\ell^2(W_n(j))\bigr)$ stands for the set of complex matrices
wuth entries indexed by pairs of words from $W_n(j)$. (Formal definitions are provided in
Section 3.) The inductive limit
\[\ca=\varinjlim(\ca_n,\iota_n)\]
is the path AF-algebra associated with the hard-core Bratteli diagram. The emerging stationary
Markov measure induces a faithful state $\w$ on $\ca$,
diagonal in the natural matrix-unit basis and compatible with the inductive-limit structure.

Our main result gives a complete description of the type of the GNS factor generated by
$\w$. The key quantity is the ratio \(q/p^2\),
which measures the change in cylinder weights when a local block $000$ is replaced
\medskip by $010$.

{\bf Theorem 1.} \textit{The von Neumann algebra $\cam=\pi_{\w}(\ca)''$, is a hyperfinite factor
({\rm{HFF}}). If $q=p^2$, then $\cam$ is an {\rm{HFF}} of type
$\mathrm{II}_1$. Otherwise $\cam$ is of type $\mathrm{III}_{\lambda}$ where
\begin{equation}\label{eqn1:1}
\lambda=\min\left\{q/p^2,p^2/q\right\}\in(0,1).
\end{equation}} \medskip

In particular, the hard-core model  under consideration provides an example that
does not lead to a factor of type $\mathrm{III}_1$.
\medskip

It is instructive to compare the above theorem with  existing results on factor types arising in
various models originated in physics. In this connection we have mentioned
papers \cite{FM,M,Oh}. In addition, a non-trivial
type-classification problem emerges for quantum Markov states
on regular trees. In a number of models considered in Refs \cite{MR,MS20},
the resulting factors have  
type \(\mathrm{III}_{\lambda}\), with \(\lambda\in(0,1)\). From this perspective, the current
work suggests that a full analysis of the types of von Neumann algebras
arising from general Markov chains (classical or quantum), and more broadly from Markov
processes on multidimensional lattices, is open.

The paper is organized as follows. In Section~2 we recall the modular-theoretic facts that are
used later. Section~3 constructs the path AF-algebra of admissible hard-core words and the
associated faithful quasi-product state. In Section~4 we prove that the generated von Neumann
algebra is a hyperfinite factor and compute its Connes spectrum via the ratio set of the tail
equivalence relation. In Section~5 we describe the centralizer of the state and the flow of
weights of the resulting factor.

\section{Preliminaries}\label{Sect2}

Let us recall facts from modular theory used in Section~4. Let $\cam$  be a factor, $\f$ a faithful normal state (FNS) on $\cam$, and $\s_t^{\f}$ the modular automorphism group
associated with $\f$.
We write \(\Gamma(\sigma^\varphi)\) for the Connes spectrum of the modular
automorphism group;  see \cite[Definition~2.2.1]{BR1}. For a factor this invariant is independent of the faithful
normal state \(\varphi\). The factor is of type \(\mathrm{III}_\lambda\),
\(0<\lambda<1\), exactly when
\(
\Gamma(\sigma^\varphi)=(\log\lambda)\mathbb Z,
\)
and it is of type \(\mathrm{III}_1\) exactly when
\(
\Gamma(\sigma^\varphi)=\mathbb R.
\) {cf.} \cite[p. 425]{S},\cite[Definition 2.1, p. 332]{T2}. On the other hand, if $\cam$ admits a tracial FNS, then $\cam$  is finite. In particular, an infinite-dimensional HFF is the HFF of type $\mathrm{II}_1$; see \cite{BR2,S}.

We write \(\Gamma(\sigma^\varphi)\) for the Connes spectrum of the modular
automorphism group. For a factor this invariant is independent of the faithful
normal state \(\varphi\). The factor is of type \(\mathrm{III}_\lambda\),
\(0<\lambda<1\), exactly when
\[
\Gamma(\sigma^\varphi)=(\log\lambda)\mathbb Z,
\]
and it is of type \(\mathrm{III}_1\) exactly when
\[
\Gamma(\sigma^\varphi)=\mathbb R.
\]

We shall use the Feldman--Moore description of von Neumann algebras associated with
measured equivalence relations; cf. \cite{FeM1,FeM2}. Let $(\sR,\m)$ be a countable measured equivalence relation,
let $D$ be the Radon--Nikod\'ym cocycle, and let $\cf (\sR,\m)$ be the associated
Feldman--Moore algebra. By \cite[Proposition~2.8]{FeM2}, the modular automorphism group of
the canonical state on $\cf (\sR,\m)$ is implemented by the cocycle $D^{it}$. Moreover,
 if $\cf (\sR,\m)$ is a factor, then
\cite[Proposition~2.11]{FeM2} identifies the Connes spectrum $\G (\cf (\sR,\m))$ with the nonzero
part of the Krieger ratio set $\rK(\sR,\m)\setminus\{0\}$:
\begin{equation}\label{KFM}
\G (\cf (\sR,\m))=\log\bigl(\rK(\sR,\m)\setminus\{0\}\bigr);
\end{equation}
see also \cite[Th\'eor\`eme~3.3.1, D\'efinition~3.3.3]{ConnesIII}. This is the tool used in
Section \ref{Sect4} to identify the type of the factor generated by the hard-core Gibbs measure.

\section{Construction of state $\w$}\label{Sect3}

Throughout the rest of the paper, symbol $\Lra$ marks a logical implication and
$\Llra$ a logical equivalence.

It is convenient to pass from the two-sided hard-core Gibbs measure to its one-sided stationary Markov
marginal. The admissibility condition excludes the word $11$, so the natural algebraic object
is the path $AF$-algebra associated with the hard-core adjacency matrix, rather than a quotient
of the full UHF algebra.

Set
\begin{equation}\label{stoc}
B=\left[\begin{matrix}1&1\\ 1&0\end{matrix}\right],
\quad P=\left[\begin{matrix}p&q\\ 1&0\end{matrix}
\right];\end{equation}
as before, $p,q\in(0,1)$ and $p+q=1$. Next, define the collection $\sX=\sX_B$ of
one-sided admissible hard-core paths:
\begin{equation}\label{scrX}\sX=\Big\{x=(x_0,x_1,x_2,\dots)\in\{0,1\}^{\bz_+};\;
B (x_k,x_{k+1})=1\ \forall\;k\in\bz_+\Big\}.\end{equation}
Here and below, $\bz_+=\{0,1,2,\ldots\}$, and $B(i,j)$ stands for
the $(i,j)$th entry of matrix $B$, where $i,j\in\{0,1\}$. Set $\sX$ is equipped
with a standard Tykhonov topology (induced by that of $\{0,1\}^{\bz_+}$) in which it is a Polish space; we say that a set $\sA\subset\sX$ is measurable when $\sA$ is a Borel
set in this topology. For $n\in\mathbb{Z}_+$, put
\begin{equation}\label{Wn}W_n=\Big\{\xi=(\xi_0,\xi_1,\dots,\xi_n)\in\{0,1\}^{n+1}:\;
B(\xi_k,\xi_{k+1})=1,\ 0\leq k<n\Big\},\end{equation}
and, for $j\in\{0,1\}$,
\begin{equation}\label{Wn(j)}W_n(j)=\Big\{\xi\in W_n:\ \xi_n=j\Big\}.\end{equation}
As was noted, $W_n$ is the collection of admissible words
of length $n+1$, while
$W_n(j)$ consists of those admissible words whose last symbol is $j$. Next, we set
\begin{equation}\label{An}
\ca_n=\onwl{\oplus}\limits_{j=0,1}\cb\bigl(\ell^2(W_n(j))\bigr).\end{equation}

For example, \(W_0=\{0,1\}\), \(W_1(0)=\{00,10\}\), \(W_1(1)=\{01\}\) $\Lra$
$\ca_1\cong M_2(\bc)\oplus\bc$, $\ca_2\cong M_3(\bc)
\oplus M_2(\bc)$, and so on.
This illustrates the basic point of the construction: all noncommutative matrix units are retained, but only inside blocks indexed by admissible words.

If $\xi,\eta\in W_n(j)$ for the same $j$, we denote by $e^{(n)}_{\xi,\eta}$ the corresponding
matrix units. They satisfy
\[
e^{(n)}_{\xi,\eta}e^{(n)}_{\zeta,\tau}=\d_{\eta,\zeta}e^{(n)}_{\xi,\tau},
\qquad
\bigl(e^{(n)}_{\xi,\eta}\bigr)^*=e^{(n)}_{\eta,\xi}.
\]
Here and below, $\d$ stands for the Kronecker symbol.

For $n\in\bz_+$, define an embedding $\iota_n:\ca_n\to\ca_{n+1}$ on matrix units by
\begin{equation}\label{embed}
\iota_n\bigl(e^{(n)}_{\xi,\eta}\bigr)
=
\sum_{k:\,B(\xi_n,k)=1} e^{(n+1)}_{\xi k,\eta k},
\end{equation}
where $\xi k=(\xi_0,\dots,\xi_n,k)$ and similarly for $\eta k$. Since $\xi_n=\eta_n$, the
same admissible one-step extensions occur on both sides of \eqref{embed}.

\begin{prop}\label{embprop}
For every $n\in\bz_+$, the map $\iota_n$ is a unital injective $*$-homomorphism. Hence the
inductive limit
\begin{equation}\label{AFalg}
\ca=\varinjlim(\ca_n,\iota_n)
\end{equation}
is an {\rm{AF}}-algebra.
\end{prop}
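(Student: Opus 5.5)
The plan is to verify the three properties of $\iota_n$ directly on matrix units and extend linearly, then quote the standard fact that an inductive limit of finite-dimensional $C^*$-algebras with injective $*$-homomorphisms is AF.

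\emph{Well-definedness and the $*$-homomorphism property.} The first thing to check is that \eqref{embed} makes sense. If $\xi,\eta\in W_n(j)$ and $B(j,k)=1$, then $\xi k$ and $\eta k$ are admissible words of length $n+2$ ending in $k$, i.e. they lie in $W_{n+1}(k)$. Hence $e^{(n+1)}_{\xi k,\eta k}$ is a genuine matrix unit in the block $\cb(\ell^2(W_{n+1}(k)))$. Since the matrix units $e^{(n)}_{\xi,\eta}$ form a linear basis of $\ca_n$, $\iota_n$ extends uniquely to a linear map.

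Compatibility with adjoints is immediate, because $(e^{(n+1)}_{\xi k,\eta k})^*=e^{(n+1)}_{\eta k,\xi k}$. For multiplicativity, take $e^{(n)}_{\xi,\eta}\in$ block $j$ and $e^{(n)}_{\zeta,\tau}\in$ block $j'$. The product
\[
\iota_n(e^{(n)}_{\xi,\eta})\,\iota_n(e^{(n)}_{\zeta,\tau})=\sum_{k,k'} e^{(n+1)}_{\xi k,\eta k}\,e^{(n+1)}_{\zeta k',\tau k'}
\]
has a nonzero term only when $\eta k=\zeta k'$, that is, when $\eta=\zeta$ and $k=k'$. So it equals $\delta_{\eta,\zeta}\sum_k e^{(n+1)}_{\xi k,\tau k}=\iota_n(e^{(n)}_{\xi,\eta}e^{(n)}_{\zeta,\tau})$. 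This covers the case $j\neq j'$ too: there $\eta\ne\zeta$ because their last letters differ, so both sides vanish.

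\emph{Unitality.} The unit of $\ca_n$ is $\sum_{\xi\in W_n}e^{(n)}_{\xi,\xi}$, so
\[
\iota_n(1)=\sum_{\xi\in W_n}\ \sum_{k:B(\xi_n,k)=1}e^{(n+1)}_{\xi k,\xi k}.
\]
The map $(\xi,k)\mapsto \xi k$ is a bijection from $\{(\xi,k):\xi\in W_n,\ B(\xi_n,k)=1\}$ onto $W_{n+1}$, with inverse given by deleting the last letter (every prefix of an admissible word is admissible). Hence $\iota_n(1)=1_{\ca_{n+1}}$.

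\emph{Injectivity.} Each block $\cb(\ell^2(W_n(j)))$ is simple, so the kernel of $\iota_n$ is a direct sum of some of the blocks. It is therefore enough to show that $\iota_n$ does not kill any block, i.e. $\iota_n(e^{(n)}_{\xi,\xi})\ne0$. This holds because every symbol $j$ has at least one successor: $B(j,0)=1$ for $j=0,1$. So the sum in \eqref{embed} always contains the nonzero term $e^{(n+1)}_{\xi0,\xi0}$, which also shows that $\iota_n$ is isometric.

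\emph{Conclusion.} The inductive limit $C^*$-algebra $\ca=\varinjlim(\ca_n,\iota_n)$ is the closure of an increasing union of finite-dimensional $C^*$-subalgebras, and hence is AF by definition. No step poses a real obstacle. The only point needing care is that each row of $B$ is nonzero; this ensures injectivity and is specific to the hard-core matrix.
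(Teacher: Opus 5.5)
Your proposal is correct and follows essentially the same route as the paper. Both arguments verify multiplicativity, compatibility with adjoints, and unitality on matrix units, deduce injectivity from the block-diagonal form of $\iota_n$, and conclude that $\ca$ is AF by definition; you are slightly more explicit about the bijection $(\xi,k)\mapsto\xi k$ onto $W_{n+1}$ and about every row of $B$ being nonzero, which the paper's phrase ``embedded diagonally into a direct sum of matrix blocks'' uses only implicitly.
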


\begin{proof}
It is enough to check the assertion for the matrix units. Let $\xi,\eta,\zeta,\tau\in W_n$ with
$\xi_n=\eta_n$ and $\zeta_n=\tau_n$. Using \eqref{embed}, we obtain
\begin{align*}
\iota_n\bigl(e^{(n)}_{\xi,\eta}\bigr)\iota_n\bigl(e^{(n)}_{\zeta,\tau}\bigr)
&=
\sum_{k:\,B(\xi_n,k)=1}\sum_{l:\,B(\zeta_n,l)=1}
e^{(n+1)}_{\xi k,\eta k}e^{(n+1)}_{\zeta l,\tau l}\\
&=
\d_{\eta,\zeta}
\sum_{k:\,B(\xi_n,k)=1} e^{(n+1)}_{\xi k,\tau k}
=
\iota_n\bigl(e^{(n)}_{\xi,\eta}e^{(n)}_{\zeta,\tau}\bigr).
\end{align*}
Similarly,
\[
\iota_n\bigl((e^{(n)}_{\xi,\eta})^*\bigr)=\iota_n(e^{(n)}_{\xi,\eta})^*.
\]
The image of the identity of $\ca_n$ is the identity of $\ca_{n+1}$
$\Lra$ $\iota_n$ is unital.
Injectivity follows because each matrix block is embedded diagonally into a direct sum of
matrix blocks.
\end{proof}

Given $n\in\bz_+$, let
\[
\cd_n=\operatorname{span}\{e^{(n)}_{\xi,\xi}:\ \xi\in W_n\}\subset \ca_n
\]
stand for the diagonal subalgebra. Define a conditional expectation $\ce_n:\ca_n\to\cd_n$ by
\begin{equation}\label{diagexp}
\ce_n\bigl(e^{(n)}_{\xi,\eta}\bigr)=\d_{\xi,\eta}e^{(n)}_{\xi,\xi}.
\end{equation}
The family $\{\ce_n\}_{n\geq 0}$ is compatible with the embeddings $\iota_n$, hence it
induces a faithful conditional expectation
\begin{equation}\label{globalexp}
\ce:\ca\to\cd,
\qquad
\cd=\varinjlim(\cd_n,\iota_n|_{\cd_n}).
\end{equation}
Under the canonical identification {\(e^{(n)}_{\xi,\xi}\longleftrightarrow \chi_{[\xi]}\),}
where
\begin{equation}\label{[xi]}[\xi]=\big\{x\in\sX:\ x_0=\xi_0,\dots,x_n=\xi_n\big\},
\end{equation}
and $\chi_\sA$ stands for the indicator of a measurable set $\sA\subset\sX$,
algebra $\cd$ is naturally isomorphic to $C(\sX)$.

Let $\p=(\p_0,\p_1)$ be the {invariant} distribution of the matrix $P$ {
$\Llra$ \(\p=\p P\), where
\(\p_0=1/(1+q)\) and \(\p_1=q/(1+q)\).
Let $\m=\m_P$ be} the stationary Markov measure on $\sX$ associated with $P$
{$\Llra$}
for every admissible word $\xi=(\xi_0,\dots,\xi_n)\in W_n$,
\begin{equation}\label{cylinder}
\m([\xi])=\p_{\xi_0}\prod_{k=0}^{n-1}P_{\xi_k\xi_{k+1}}.
\end{equation}
{Here and below,} $P_{ij}$ denotes the $(i,j)$-entry of $P$. Since $p,q\in(0,1)$,
every admissible cylinder has strictly positive $\m$-measure.

For each $n\in\bz_+$ define a state $\w_{P,n}$ on $\ca_n$ by
\begin{equation}\label{localstate}
\w_{P,n}\bigl(e^{(n)}_{\xi,\eta}\bigr)
=
\d_{\xi,\eta}\m([\xi]),
\qquad
\xi,\eta\in W_n,\ \xi_n=\eta_n.
\end{equation}

\begin{prop}\label{stateprop}
The family $\{\w_{P,n}\}_{n\geq 0}$ is compatible with the embeddings $\iota_n$.
Consequently, there exists a unique state $\w=\w_P$ on $\ca$ such that
\[
\w|_{\ca_n}=\w_{P,n}
\qquad (n\in\bz_+).
\]
Moreover,
\begin{equation}\label{diagstate}
\w(x)=\m\bigl(\ce(x)\bigr),
\qquad x\in \ca,
\end{equation}
and $\w$ {is an {\rm{FNS}}.}
\end{prop}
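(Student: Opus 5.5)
The plan is to check compatibility on matrix units, which span each $\ca_n$. Take $\xi,\eta\in W_n$ with $\xi_n=\eta_n=j$. By \eqref{embed} and \eqref{localstate},
\[
\w_{P,n+1}\bigl(\iota_n(e^{(n)}_{\xi,\eta})\bigr)=\sum_{k:\,B(j,k)=1}\d_{\xi k,\eta k}\,\m([\xi k])=\d_{\xi,\eta}\sum_{k:\,B(j,k)=1}\m([\xi k]).
\]
From \eqref{cylinder}, $\m([\xi k])=\m([\xi])P_{jk}$. Moreover $P_{jk}=0$ exactly when $B(j,k)=0$, because the only zero entry of $P$ is $P_{11}$, and $B(1,1)=0$. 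Hence the sum over admissible $k$ equals $\m([\xi])\sum_k P_{jk}=\m([\xi])$, since $P$ is stochastic. This gives $\w_{P,n+1}\circ\iota_n=\w_{P,n}$.

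Each $\w_{P,n}$ is a state. Positivity holds because, on each block $\cb(\ell^2(W_n(j)))$, it is $\Tr(D_j\,\cdot)$ with $D_j$ the diagonal matrix of the positive numbers $\m([\xi])$. Normalization holds because $\sum_{\xi\in W_n}\m([\xi])=1$, which follows from $\m$ being a probability measure, or directly from stochasticity of $P$ together with $\sum_i\p_i=1$. Compatibility then defines $\w$ on the dense union $\bigcup_n\ca_n$ as a positive unital functional of norm one. It extends uniquely by continuity to a state on $\ca$, and uniqueness is clear because its values on each $\ca_n$ are prescribed.

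For \eqref{diagstate}, on $\ca_n$ one has $\w_{P,n}=\w_{P,n}\circ\ce_n$, since $\w_{P,n}$ vanishes on off-diagonal matrix units. On diagonal matrix units, $\w_{P,n}(e^{(n)}_{\xi,\xi})=\m([\xi])=\int\chi_{[\xi]}\,d\m$ under the identification $\cd\cong C(\sX)$. So $\w=\m\circ\ce$ on $\bigcup_n\ca_n$, and both sides are continuous, which gives the identity on all of $\ca$. Here $\ce$ is contractive, being a norm-one projection obtained as the limit of the compatible $\ce_n$.

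Normality is automatic for a state on a $C^*$-algebra. The substantive content of "FNS" is faithfulness, which I would read as meaning $\w$ is faithful on $\ca$ and hence its GNS vector is separating. This is the step I expect to be the main obstacle.

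Let $x\ge0$ with $\w(x)=0$. Then $\m(\ce(x))=0$, and $\ce(x)\ge0$ is a continuous function on $\sX$. Every nonempty open set of $\sX$ contains an admissible cylinder, which has positive $\m$-measure, so $\m$ has full support. Hence $\ce(x)=0$, and faithfulness of $\ce$ gives $x=0$.

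Faithfulness of $\ce$ was asserted in the text, but I would verify it directly as follows. On $\ca_n$ the map $\ce_n$ is faithful, since a positive matrix with zero diagonal vanishes. For general $x\ge0$ in $\ca$, the key observation is that $\ce$ is the limit of maps $\ce_n$ that are each given by averaging over a compact group of diagonal unitaries. I would realize $\ce$ as the average $\int u\,x\,u^*\,du$ over the compact abelian group of unitaries $u\in\cd$ with locally constant phases (or over an increasing sequence of finite groups), which shows $\ce(x)\ge c\,\cdot$ in a form that forces $\ce(x)=0\Rightarrow x=0$.

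If that averaging argument proves awkward, I would instead argue via the GNS representation. From $\w(y^*x y)=\m(\ce(y^*xy))$, the vanishing of $\w(x)$ for $x\ge0$ gives $\w(e^{(n)}_{\eta,\xi}\,x\,e^{(n)}_{\xi,\eta})=0$ for all matrix units, using that $\w$ restricted to the corner $e^{(n)}_{\xi,\xi}\ca e^{(n)}_{\xi,\xi}$ dominates $\m([\xi])$ times a state. This yields $\pi_\w(x)=0$. Finally, $\ca$ is simple because the Bratteli diagram is primitive ($B^2>0$), so $\pi_\w$ is injective and $x=0$.
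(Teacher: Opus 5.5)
Your proposal is correct and follows the paper's argument. Compatibility is checked on matrix units via $\m([\xi k])=\m([\xi])P_{\xi_n k}$ and $\sum_{k:\,B(\xi_n,k)=1}P_{\xi_n k}=1$, and faithfulness comes from $\w=\m\circ\ce$, the faithfulness of $\ce$, and the positivity of all admissible cylinder masses; your positivity, normalization and full-support checks only make explicit what the paper leaves tacit.

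Some caveats on your optional extras:

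\emph{Averaging sketch.} This cannot deliver a bound $\ce(x)\ge c\,x$ with $c>0$. The blocks of $\ca_n$ grow without bound, so $\cd\subset\ca$ has infinite index. Moreover, a limit of faithful maps need not be faithful.

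\emph{GNS fallback.} This route is sound, provided its key step is justified by
$\w\bigl(e^{(n)}_{\eta,\xi}\,x\,e^{(n)}_{\xi,\eta}\bigr)=\frac{\m([\eta])}{\m([\xi])}\,\w\bigl(e^{(n)}_{\xi,\xi}\,x\,e^{(n)}_{\xi,\xi}\bigr)\le\frac{\m([\eta])}{\m([\xi])}\,\w(x)$ for $x\ge 0$. The inequality holds because the cross terms vanish under $\ce$, since $e^{(n)}_{\xi,\xi}\in\cd$.

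\emph{Separating vector.} Faithfulness on $\ca$ does not by itself make the GNS vector separating for $\pi_\w(\ca)''$. Your ``hence'' therefore needs an extra input, such as the KMS property of $\w$ for the dynamics acting as $\mathrm{Ad}\,\r_n^{it}$ on each $\ca_n$. The paper likewise leaves this point implicit.
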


\begin{proof}
It is enough to check compatibility on the matrix units. Let $\xi,\eta\in W_n$ with
$\xi_n=\eta_n$. By \eqref{embed} and \eqref{localstate},
\begin{align*}
\w_{P,n+1}\bigl(\iota_n(e^{(n)}_{\xi,\eta})\bigr)
&=
\sum_{k:\,B(\xi_n,k)=1}
\w_{P,n+1}\bigl(e^{(n+1)}_{\xi k,\eta k}\bigr)=
\d_{\xi,\eta}
\sum_{k:\,B(\xi_n,k)=1} \m([\xi k])\\
&=
\d_{\xi,\eta}\m([\xi])\sum_{k:\,B(\xi_n,k)=1}P_{\xi_n k}=
\d_{\xi,\eta}\m([\xi])
=
\w_{P,n}\bigl(e^{(n)}_{\xi,\eta}\bigr).
\end{align*}
Therefore the inductive-limit state $\w$ is well defined. Formula \eqref{diagstate} follows
from \eqref{diagexp} and \eqref{localstate}. Since $\ce$ is faithful and $\m([\xi])>0$ for
every admissible cylinder, the state $\w$ is faithful.
\end{proof}

The diagonal form of $\w$ on the canonical matrix units shows that, in the terminology of
Evans \cite{E}, it is a faithful quasi-product state on the path AF-algebra $\ca$.

For later use, let $\Tr_n$ denote the canonical unnormalized trace on $\ca_n$, characterized by
\[
\Tr_n\bigl(e^{(n)}_{\xi,\eta}\bigr)=\d_{\xi,\eta}.
\]
Then the restriction of $\w$ to $\ca_n$ is given by
\begin{equation}\label{density}
\w(x)=\Tr_n(\r_n x),
\qquad x\in \ca_n,
\end{equation}
where
\begin{equation}\label{rho}
\r_n=
\sum_{\xi\in W_n}\m([\xi])e^{(n)}_{\xi,\xi}
=
\sum_{\xi\in W_n}
\left(
\p_{\xi_0}\prod_{k=0}^{n-1}P_{\xi_k\xi_{k+1}}
\right)e^{(n)}_{\xi,\xi}
\end{equation}
is the local density matrix. Since all diagonal coefficients are strictly positive, the local
Hamiltonian $H_n=-\log \r_n$ is well defined and has the form
\begin{equation}\label{Hnnew}
H_n=
-\sum_{\xi\in W_n}
\left(
\log\p_{\xi_0}+\sum_{k=0}^{n-1}\log P_{\xi_k\xi_{k+1}}
\right)e^{(n)}_{\xi,\xi}.
\end{equation}
In particular, the stationary boundary term $\log\p_{\xi_0}$ is part of the finite-volume
Hamiltonian.

\section{{The} factor generated by state $\w$}\label{Sect4}

Throughout {Sections \ref{Sect4} and \ref{Sect5}, we set $\k =q/p^2$, $\l=\min [\k,
\k^{-1}]$ and} let
$\cam :=\pi_{\w}  (\ca)''$ be the von
Neumann algebra generated by the GNS representation of state $\w$ constructed
in Proposition \ref{stateprop}; cf. \eqref{diagstate}.
Since $\w$ is faithful, we identify $\ca$ with its image in $\cam$ and regard $\w$ as an
FNS of $\cam$.

Given $n\in\bz_+$ and $j\in\{0,1\}${, put}
\begin{equation}\label{centproj4}
\rp_j^{(n)}=\sum_{\xi\in W_n(j)} e^{(n)}_{\xi,\xi}.
\end{equation}
Then $\ca_n$ has the center {\(
Z(\ca_n)=\bc\,\rp_0^{(n)}\oplus \bc\,\rp_1^{(n)}\).}

\begin{prop}\label{modform4}
Let $\left\{\s_t^{\w}\right\}$ be the modular group for state $\w$. Then,
for every $n\geq 0$ and every $\xi,\eta\in W_n$ with $\xi_n=\eta_n$,
\begin{equation}\label{modunits4}
\s_t^{\w}\bigl(e^{(n)}_{\xi,\eta}\bigr)
=\left(\frac{\m ([\xi])}{\m ([\eta])}\right)^{it}e^{(n)}_{\xi,\eta},
\qquad t\in\br.\end{equation}
In particular, each local matrix unit is an entire analytic element for
$\left\{\s_t^{\w}\right\}$.
\end{prop}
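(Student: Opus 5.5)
The plan is to write down an explicit candidate one-parameter group of automorphisms and check that it satisfies the KMS condition at $\beta=-1$ for $\w$. The modular group is then identified by uniqueness (Takesaki's theorem).

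\textbf{Step 1: the local groups.} For $x\in\ca_n$ set
\[
\a^{(n)}_t(x)=\r_n^{it}\,x\,\r_n^{-it}=e^{-itH_n}\,x\,e^{itH_n},
\]
with $\r_n$ as in \eqref{rho}. Since $\r_n$ is diagonal and invertible in $\ca_n$, this is a one-parameter group of $*$-automorphisms of $\ca_n$. On matrix units it acts by
\[
\a^{(n)}_t\bigl(e^{(n)}_{\xi,\eta}\bigr)=\bigl(\m([\xi])/\m([\eta])\bigr)^{it}e^{(n)}_{\xi,\eta}.
\]

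\textbf{Step 2: compatibility with the embeddings.} We check $\iota_n\circ\a^{(n)}_t=\a^{(n+1)}_t\circ\iota_n$. By \eqref{embed} it suffices to use
\[
\frac{\m([\xi k])}{\m([\eta k])}=\frac{\m([\xi])\,P_{\xi_n k}}{\m([\eta])\,P_{\eta_n k}}=\frac{\m([\xi])}{\m([\eta])},
\]
which holds by \eqref{cylinder} because $\xi_n=\eta_n$. This is the one place where the Markov structure of the path weights enters.

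\textbf{Step 3: extension to $\ca$.} Each $\a^{(n)}_t$ is isometric, so the compatible family extends to a strongly continuous group $\a_t$ of automorphisms of $\ca$. It leaves $\w$ invariant, since $\w(\a_t(x))=\Tr_n(\r_n\r_n^{it}x\r_n^{-it})=\w(x)$ by \eqref{density}. Hence $\a_t$ extends to the GNS von Neumann algebra $\cam$: it is implemented by the unitary $\pi_\w(x)\Omega_\w\mapsto\pi_\w(\a_t x)\Omega_\w$ and is $\s$-weakly continuous.

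\textbf{Step 4: the KMS condition.} For $x,y\in\ca_n$ the function
\[
F(z)=\Tr_n\bigl(\r_n\,x\,\r_n^{iz}y\r_n^{-iz}\bigr)
\]
is entire, because $\r_n$ is a finite positive invertible diagonal matrix. Cyclicity of $\Tr_n$ gives the boundary relations $F(t)=\w(x\a_t(y))$ and $F(t+i)=\w(\a_t(y)x)$, which is the modular (KMS at $\beta=-1$) condition. To pass from $\bigcup_n\ca_n$ to all of $\cam$ we use that $\bigcup_n\ca_n$ is a $\s$-weakly dense $\a$-invariant $*$-subalgebra. The KMS condition on such a subalgebra suffices, via the standard density argument in Bratteli--Robinson or Takesaki's characterisation. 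We then conclude $\a_t=\s_t^\w$, which gives \eqref{modunits4}.

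The main obstacle is this density step. It needs either a citation of the result that KMS on a $\s$-weakly dense invariant $*$-subalgebra of analytic elements characterises the modular group, or a direct argument. For the direct route I would try the Feldman--Moore picture: $\cam\cong\cf(\sR,\m)$ for the tail relation, with modular group given by $D^{it}$ by \cite[Proposition~2.8]{FeM2}. The Radon--Nikod\'ym cocycle of the tail relation on $[\xi]\leftrightarrow[\eta]$ is exactly $\m([\xi])/\m([\eta])$, which would reproduce \eqref{modunits4} directly.

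Finally, the analyticity claim is immediate. The map $t\mapsto(\m([\xi])/\m([\eta]))^{it}e^{(n)}_{\xi,\eta}$ extends to the entire function $z\mapsto(\m([\xi])/\m([\eta]))^{iz}e^{(n)}_{\xi,\eta}$.
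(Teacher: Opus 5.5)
Your argument is correct in substance and takes a genuinely different route from the paper. The paper never writes down a candidate group. It works directly with the Tomita operator, checking against the core $\bigcup_m\ca_m\Om$ that $S^*e^{(n)}_{\xi,\eta}\Om=\frac{\m([\eta])}{\m([\xi])}\,e^{(n)}_{\eta,\xi}\Om$. Hence each $e^{(n)}_{\xi,\eta}\Om$ is an eigenvector of $\D=S^*S$ with eigenvalue $\m([\xi])/\m([\eta])$, and $\s^{\w}_t$ is read off from $\D^{it}$.

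The key input is the same in both proofs: appending a common continuation to $\xi,\eta$ with $\xi_n=\eta_n$ leaves $\m([\xi])/\m([\eta])$ unchanged. In your version this is the compatibility $\iota_n\circ\a^{(n)}_t=\a^{(n+1)}_t\circ\iota_n$.

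What each route buys:
\begin{itemize}
\item The paper's computation is elementary and self-contained. However, it takes for granted that $\Om$ is separating for $\cam$, using this both to set up $S$ and in its last step, and this does not follow from faithfulness of $\w$ on $\ca$ alone.
\item Your KMS route supplies the separating property automatically, because the GNS vector of a KMS state is separating for the generated von Neumann algebra (Bratteli--Robinson, Vol.~2, \S5.3).
\end{itemize}

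The density step you flag is not a real obstacle. The subalgebra $\bigcup_n\ca_n$ is \emph{norm}-dense in $\ca$, $\a$-invariant, and consists of entire analytic elements. That is exactly what the definition of an $(\a,-1)$-KMS state on the C$^*$-dynamical system $(\ca,\a)$ requires. The standard results there show that the normal extension of $\w$ is KMS for the extended group on $\cam$, and Takesaki's uniqueness theorem then gives $\a_t=\s^{\w}_t$. You do not need the Feldman--Moore fallback.

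One correction in Step 4. With $F(z)=\Tr_n\bigl(\r_n x\,\r_n^{iz}y\,\r_n^{-iz}\bigr)$, cyclicity gives $F(t-i)=\w(\a_t(y)x)$, not $F(t+i)$. The relevant strip is $-1\le\operatorname{Im}z\le 0$, i.e.\ $\w(x\,\a_{-i}(y))=\w(yx)$, which is the $\beta=-1$ condition you name. For instance, take $\r=\mathrm{diag}(a,b)$, $x=e_{12}$, $y=e_{21}$. Then $F(z)=a(b/a)^{iz}$, so $F(t-i)=b(b/a)^{it}=\w(\a_t(y)x)$, whereas $F(t+i)=(a^2/b)(b/a)^{it}$. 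As written, your boundary relation is the $\beta=+1$ condition. That would make $\a_{-t}$ the modular group and invert the ratio in \eqref{modunits4}. Since the sign of that exponent is the whole content of the statement, fix this.
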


\begin{proof}
Let $(H,\pi,\Om)$ be the GNS triple of $(\ca,\w)$ {We identify $\ca$ with its image
$\pi(\ca)$ in the algebra $\cb (H)$ of bounded operators in $H$ and use the
convention \(
\langle x\Om,y\Om\rangle=\w(x^*y)\), where \(x,y\in \ca\).}

Let
\[
S_0(x\Om)=x^*\Om,
\qquad x\in \ca,
\]
be the Tomita operator on the dense domain $\ca\Om$, and let $S=\ov{S_0}$ be its closure.
Then $\D=S^*S$ is the modular operator associated with $\w$.

Fix $n\in\bz_+$ and $\xi,\eta\in W_n$ with $\xi_n=\eta_n$. Set
\(v_{\xi,\eta}:=e^{(n)}_{\xi,\eta}\Om\) { $\Lra$ $v_{\xi,\eta}\in\operatorname{Dom}(S)$ and \(Sv_{\xi,\eta}=
\bigl(e^{(n)}_{\xi,\eta}\bigr)^*\Om=
e^{(n)}_{\eta,\xi}\Om=v_{\eta,\xi}\).}

We claim that
\[
S^*v_{\xi,\eta}
=
\frac{\m([\eta])}{\m([\xi])}\,v_{\eta,\xi}.
\]
To prove this, let $m\geq n$, put $j:=\xi_n=\eta_n$, and denote by
$\operatorname{Ext}_{m-n}(j)$ the set of admissible continuations {
\(\tau=(\tau_1,\dots,\tau_{m-n})\)}
such that $B(j,\tau_1)=1$ and $B(\tau_r,\tau_{r+1})=1$ for $1\le r<m-n$.
For $\tau\in \operatorname{Ext}_{m-n}(j)$, write $\xi\tau$ and $\eta\tau$ for the
concatenated words in $W_m$. By iterating \eqref{embed}, we have
\[
e^{(n)}_{\xi,\eta}
=
\sum_{\tau\in \operatorname{Ext}_{m-n}(j)} e^{(m)}_{\xi\tau,\eta\tau},
\qquad
e^{(n)}_{\eta,\xi}
=
\sum_{\tau\in \operatorname{Ext}_{m-n}(j)} e^{(m)}_{\eta\tau,\xi\tau}.
\]

Now let $\a,\b,\g,\t\in W_m$ with $\a_m=\b_m$ and $\g_m=\t_m$. By
\eqref{localstate},
\[
\langle e^{(m)}_{\a,\b}\Om,e^{(m)}_{\g,\t}\Om\rangle
=
\d_{\a,\g}\d_{\b,\t}\,\m([\b]).
\]
Therefore{
\begin{align*}
\bigl\langle S_0(e^{(m)}_{\a,\b}\Om),\,v_{\xi,\eta}\bigr\rangle
&=
\bigl\langle e^{(m)}_{\b,\a}\Om,e^{(n)}_{\xi,\eta}\Om\bigr\rangle
=
\sum_{\tau\in \operatorname{Ext}_{m-n}(j)}
\bigl\langle e^{(m)}_{\b,\a}\Om,e^{(m)}_{\xi\tau,\eta\tau}\Om\bigr\rangle\\
&=
\sum_{\tau\in \operatorname{Ext}_{m-n}(j)}
\d_{\b,\xi\tau}\d_{\a,\eta\tau}\,\m([\a]).
\end{align*}}
On the other hand,{
\begin{align*}
\left\langle e^{(m)}_{\a,\b}\Om,
\frac{\m([\eta])}{\m([\xi])}\,v_{\eta,\xi}\right\rangle
&=
\frac{\m([\eta])}{\m([\xi])}
\bigl\langle e^{(m)}_{\a,\b}\Om,e^{(n)}_{\eta,\xi}\Om\bigr\rangle
=
\frac{\m([\eta])}{\m([\xi])}
\sum_{\tau\in \operatorname{Ext}_{m-n}(j)}
\bigl\langle e^{(m)}_{\a,\b}\Om,e^{(m)}_{\eta\tau,\xi\tau}\Om\bigr\rangle\\
&=\frac{\m([\eta])}{\m([\xi])}
\sum_{\tau\in \operatorname{Ext}_{m-n}(j)}
\d_{\a,\eta\tau}\d_{\b,\xi\tau}\,\m([\b]).
\end{align*}}
Since $\xi_n=\eta_n=j$, appending the same continuation $\tau$ multiplies $\m([\xi])$
and $\m([\eta])$ by the same factor, so
\[
\frac{\m([\eta\tau])}{\m([\xi\tau])}
=
\frac{\m([\eta])}{\m([\xi])}.
\]
Hence the last two displayed expressions are equal. Since {
$\onwl{\cup}\limits_{m\geq n}\ca_m\Om$} is dense in $H$, the claim follows.

Consequently,
\[
\D v_{\xi,\eta}
=
S^*Sv_{\xi,\eta}
=
S^*v_{\eta,\xi}
=
\frac{\m([\xi])}{\m([\eta])}\,v_{\xi,\eta}.
\]
Thus $v_{\xi,\eta}$ is an eigenvector of $\D$, and for every $t\in\br$,
\[
\D^{it}e^{(n)}_{\xi,\eta}\Om
=
\left(\frac{\m([\xi])}{\m([\eta])}\right)^{it}
e^{(n)}_{\xi,\eta}\Om.
\]
Since $\D^{it}\Om=\Om$, we obtain
\[
\s_t^{\w}\bigl(e^{(n)}_{\xi,\eta}\bigr)\Om
=
\D^{it}e^{(n)}_{\xi,\eta}\D^{-it}\Om
=
\left(\frac{\m([\xi])}{\m([\eta])}\right)^{it}
e^{(n)}_{\xi,\eta}\Om.
\]
Because $\Om$ is separating for $\cam$, \eqref{modunits4} follows.

Finally, $\m([\xi])/\m([\eta])>0$, so the map
\[
z\longmapsto
\exp\!\left(
iz\log\frac{\m([\xi])}{\m([\eta])}
\right)e^{(n)}_{\xi,\eta},
\qquad z\in\bc,
\]
is entire and restricts on the real axis to
$t\mapsto \s_t^{\w}(e^{(n)}_{\xi,\eta})$. Therefore
$e^{(n)}_{\xi,\eta}$ is an entire analytic element for $\s^{\w}$.
\end{proof}

\begin{prop}\label{factorprop4}
The von Neumann algebra $\cam$ is an { {\rm{HFF}}}.
\end{prop}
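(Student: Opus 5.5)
Hyperfiniteness is immediate: $\ca$ is the norm closure of the increasing union of the finite-dimensional $*$-subalgebras $\ca_n$ (Proposition \ref{embprop}), so $\cam=\pi_\w(\ca)''$ is the weak closure of an increasing sequence of finite-dimensional algebras. Since $\w$ is faithful and normal, we may use the separable GNS space, and $\cam$ is hyperfinite by definition. The substantive claim is therefore that $\cam$ is a factor.

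For factoriality I would use the Feldman--Moore picture of Section \ref{Sect2}. The pair $(\ca,\cd)$ is the groupoid algebra of the tail equivalence relation $\sR$ on $\sX$: $x\sim y$ iff $x_k=y_k$ for all large $k$. The matrix units $e^{(n)}_{\xi,\eta}$ correspond to the partial isomorphisms $[\eta]\to[\xi]$ that replace the initial segment $\eta$ by $\xi$; these make sense exactly because $\xi_n=\eta_n$. By \eqref{diagstate}, $\w=\m\circ\ce$, so $\cam\cong\cf(\sR,\m)$. Here $\m$ is $\sR$-quasi-invariant, with Radon--Nikodým cocycle $\m([\xi])/\m([\eta])$ on such a swap; this matches Proposition \ref{modform4}. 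Then $Z(\cam)$ consists of the $L^\infty(\sX,\m)$ functions that are $\sR$-invariant, so $\cam$ is a factor iff $\sR$ is ergodic with respect to $\m$.

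To prove ergodicity I would argue directly inside $\cam$. Let $z\in Z(\cam)$. Because $z$ commutes with every $e^{(n)}_{\xi,\xi}$, we have $z\in\cd'\cap\cam$. This equals $L^\infty(\sX,\m)$, since $\cd$ is a Cartan (maximal abelian) subalgebra, which I would prove via the conditional expectation $\ce$. So $z=f$ is a function, and commuting with all $e^{(n)}_{\xi,\eta}$ means that $f$ is invariant under every swap of initial segments. In other words, $f$ is measurable with respect to the tail $\sigma$-field $\bigcap_n\sigma(x_n,x_{n+1},\dots)$ modulo null sets. The chain with matrix $P$ is irreducible and aperiodic: $P_{00}=p>0$, so $P^2$ has all entries positive. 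A stationary mixing Markov chain has trivial tail $\sigma$-field, for instance by the Blackwell--Freedman criterion or by the convergence $P^m\to\one\p$. Hence $f$ is constant and $Z(\cam)=\bc\one$.

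The main obstacle is the step $\cd'\cap\cam=\cd''$, together with the identification of invariance under swaps with tail measurability. For the first part I would use the normal faithful conditional expectation $\ce$. The $\w$-preserving conditional expectation $E_n$ onto $\ca_n$ exists, since $\ca_n$ is globally invariant under $\s^\w$ by Proposition \ref{modform4}, so Takesaki's theorem applies. For $z\in\cd'$, the element $E_n(z)$ commutes with $\cd_n$ and so lies in $\cd_n$. Letting $n\to\infty$ and using martingale convergence, $z=\lim E_n(z)$ lies in $\cd''$.

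Alternatively, I would avoid Cartan theory as follows. The central element $z$ gives $E_n(z)\in Z(\ca_n)'\cap\cd_n$, and $E_n(z)$ commutes with all of $\ca_n$, so $E_n(z)\in Z(\ca_n)=\bc\rp^{(n)}_0\oplus\bc\rp^{(n)}_1$. Thus $E_n(z)=a_n\rp_0^{(n)}+b_n\rp_1^{(n)}$, and $E_n(z)\to z$ strongly. Mixing gives asymptotic independence of $\rp^{(n)}_j$ from each fixed $\ca_m$, so $\w(y\rp_j^{(n)})\to\w(y)\p_j$. Computing $\w(yz)$ this way forces $z$ to be scalar.
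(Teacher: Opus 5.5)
Your proposal is correct, but your main route to factoriality differs from the paper's. You prove that $\cd''$ is maximal abelian in $\cam$, using the $\w$-preserving expectations $E_n$ and $\cd_n'\cap\ca_n=\cd_n$. So a central element is a function invariant under all initial-segment swaps, hence tail-measurable modulo null sets, and you then invoke the tail zero--one law for the stationary aperiodic chain.

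The paper never leaves the centers of the local algebras. It computes $F_n(\rp^{(n+1)}_0)=p\,\rp^{(n)}_0+\rp^{(n)}_1$ and $F_n(\rp^{(n+1)}_1)=q\,\rp^{(n)}_0$. Then $F_n\circ F_{n+1}=F_n$ becomes $a_n=pa_{n+1}+qb_{n+1}$, $b_n=a_{n+1}$; as column vectors, $(a_n,b_n)$ is $P$ applied to $(a_{n+1},b_{n+1})$. Hence $a_n-b_n=(-q)^m(a_{n+m}-b_{n+m})$, which tends to $0$ as $m\to\infty$ by boundedness; here $-q$ is the subdominant eigenvalue of $P$.

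Both proofs therefore rest on the mixing of $P$. The paper's is elementary and self-contained. Yours imports three ingredients: the MASA property, the translation of swap invariance into tail measurability, and tail triviality for Markov chains. In exchange, yours is conceptual and applies verbatim to any irreducible aperiodic topological Markov chain. It also proves directly that factoriality of $\cam$ is equivalent to ergodicity of $(\sR,\m)$, which the paper later uses in Proposition~\ref{ratiosetprop4} only by citation.

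Your ``alternative'' is the paper's argument in dual form. It starts from the same $E_n(z)=a_n\rp^{(n)}_0+b_n\rp^{(n)}_1$, which follows from bimodularity alone, so the intermediate $Z(\ca_n)'\cap\cd_n$ is superfluous. The following facts together give $\w(yz)=\w(y)\w(z)$, hence $z=\w(z)\one$ by faithfulness:
\begin{enumerate}
\item $\w(yz)=\w(yE_n(z))$ for $y\in\ca_m$, $n\ge m$;
\item $\w(\rp^{(n)}_j)=\p_j$;
\item $\w(y\rp^{(n)}_j)\to\w(y)\p_j$;
\item $|a_n|,|b_n|\le\|z\|$.
\end{enumerate}
This is valid, but you should write these steps out rather than compress them into ``computing $\w(yz)$ this way forces $z$ to be scalar.''
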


\begin{proof}
{AF algebra $\ca$ coincides with the norm-closure \(
\ca=\overline{\onwl{\cup}\limits_{n\geq 0}\ca_n}^{\|\cdot\|}\)
and each $\ca_n$ is finite-dimensional. Therefore,
\[
\cam=\pi_{\w}(\ca)''=\left(\onwl{\cup}\limits_{n\geq 0}\pi_{\w}(\ca_n)\right)''.
\]
Thus, }$\cam$ is generated by an increasing sequence of finite-dimensional
$*$-subalgebras, so $\cam$ is hyperfinite.

It remains to prove {that the center} $Z(\cam)=\bc\one$.
For every $n\in\bz_+$, Proposition~\ref{modform4} shows that {
\(\s_t^{\w}(\ca_n)=\ca_n\), \(t\in\br\).}

Since $\ca_n$ is finite-dimensional, it is weakly closed in $\cam$. Therefore,
by Takesaki's theorem \cite[Section~5.3]{BR2}, there exists a unique
$\w$-preserving normal conditional expectation {\(F_n:\cam\to \ca_n\).}

Let $z\in Z(\cam)$ and set
\[
z_n:=F_n(z)\in \ca_n.
\]
Because $F_n$ is $\ca_n$-bimodular, for every $x\in \ca_n$ we have {
\(xz_n=F_n(xz)=F_n(zx)=z_nx\) $\Lra$ $z_n\in Z(\ca_n)$.} By \eqref{centproj4}, there exist scalars
$a_n,b_n\in\bc$ such that
\begin{equation}\label{znform4}
z_n=a_n\rp_0^{(n)}+b_n\rp_1^{(n)}.
\end{equation}

We next compute the expectations {of projections}
$\rp_0^{(n+1)}$ and $\rp_1^{(n+1)}$. Since $\rp_j^{(n+1)}$ commutes with $\ca_n$,
the element $F_n(\rp_j^{(n+1)})$ belongs to $Z(\ca_n)$. Thus there exist
$\a,\b,\g,\d\in\bc$ such that
\[
F_n\bigl(\rp_0^{(n+1)}\bigr)=\a\,\rp_0^{(n)}+\b\,\rp_1^{(n)},
\qquad
F_n\bigl(\rp_1^{(n+1)}\bigr)=\g\,\rp_0^{(n)}+\d\,\rp_1^{(n)}.
\]

Let $\xi\in W_n(0)$, {then}
\begin{align*}
\a\,\m([\xi])
&=
\w\Bigl(e^{(n)}_{\xi,\xi}F_n\bigl(\rp_0^{(n+1)}\bigr)\Bigr)=
\w\Bigl(e^{(n)}_{\xi,\xi}\rp_0^{(n+1)}\Bigr)=
\w\bigl(e^{(n+1)}_{\xi 0,\xi 0}\bigr)\\
&=
\m([\xi 0])=
p\,\m([\xi]).
\end{align*}
Hence $\a=p$.

{Similarly, if $\xi\in W_n(1)$, then}
\begin{align*}
\b\,\m([\xi])
&=
\w\Bigl(e^{(n)}_{\xi,\xi}F_n\bigl(\rp_0^{(n+1)}\bigr)\Bigr)=
\w\Bigl(e^{(n)}_{\xi,\xi}\rp_0^{(n+1)}\Bigr)=
\w\bigl(e^{(n+1)}_{\xi 0,\xi 0}\bigr)\\
&=
\m([\xi 0])=
\m([\xi]),
\end{align*}
because $P_{10}=1$. Therefore, {$\b=1$ $\Lra$}
\(F_n\bigl(\rp_0^{(n+1)}\bigr)=p\,\rp_0^{(n)}+\rp_1^{(n)}.\)

Next, let $\xi\in W_n(0)$. Then{
\begin{align*}
\g\,\m([\xi])=
\w\Bigl(e^{(n)}_{\xi,\xi}F_n\bigl(\rp_1^{(n+1)}\bigr)\Bigr)=
\w\bigl(e^{(n+1)}_{\xi 1,\xi 1}\bigr)
=\m([\xi 1])=q\,\m([\xi]).\end{align*}}
Hence $\g=q$. If $\xi\in W_n(1)$, then there is no admissible extension of $\xi$
by $1$, so {\(e^{(n)}_{\xi,\xi}\rp_1^{(n+1)}=0\) $\Lra$ $\d=0$.} Therefore
\begin{equation}\label{Fncenters4}
F_n\bigl(\rp_0^{(n+1)}\bigr)=p\,\rp_0^{(n)}+\rp_1^{(n)},
\qquad
F_n\bigl(\rp_1^{(n+1)}\bigr)=q\,\rp_0^{(n)}.
\end{equation}

Since $\ca_n\subset \ca_{n+1}$, the map $F_n\circ F_{n+1}$ is again a
$\w$-preserving conditional expectation from $\cam$ onto $\ca_n$.
By uniqueness, we have {\(F_n\circ F_{n+1}=F_n\).}
Applying this identity to $z$ and using \eqref{znform4} and \eqref{Fncenters4},
we obtain
\[
z_n
=
F_n(z_{n+1})
=
a_{n+1}F_n\bigl(\rp_0^{(n+1)}\bigr)+b_{n+1}F_n\bigl(\rp_1^{(n+1)}\bigr),
\]
hence
\begin{equation}\label{recurrenceab4}
a_n=pa_{n+1}+qb_{n+1},
\qquad
b_n=a_{n+1}.
\end{equation}

Set $d_n:=a_n-b_n$. Then \eqref{recurrenceab4} yields
\[
d_n
=
(pa_{n+1}+qb_{n+1})-a_{n+1}
=
-q(a_{n+1}-b_{n+1})
=
-q\,d_{n+1}.
\]
Iterating, we get {
\[d_n=(-q)^m d_{n+m},\;\;\hbox{and}\;\;\|z_{n+m}\|=\|F_{n+m}(z)\|\le \|z\|,
\;\; m\geq 1.\]
Since}
\[
z_{n+m}=a_{n+m}\rp_0^{(n+m)}+b_{n+m}\rp_1^{(n+m)}
\]
with $\rp_0^{(n+m)}$ and $\rp_1^{(n+m)}$ orthogonal projections, we have
\[
\|z_{n+m}\|=\max\{|a_{n+m}|,|b_{n+m}|\}.
\]
Therefore, {
\[
|d_{n+m}|\le |a_{n+m}|+|b_{n+m}|\le 2\|z\|\quad\Lra\quad
|d_n|\le 2\|z\|\,q^m\quad (m\geq 1).\]}
Letting $m\to\infty$ gives $d_n=0$. Hence $a_n=b_n$ for all $n\in\bz_+$ $\Lra$
$z_n\in\bc\one$ for all $n\in\bz_+$.

Let $(H,\pi,\Om)$ be the GNS triple of $(\ca,\w)$; { as before, we} identify
$\ca\subset\cam\subset\cb (H)$. Since $F_n$ is $\w$-preserving, the map {
\(P_n(x\Om):=F_n(x)\Om\), \(x\in \cam\),
extends to an} orthogonal projection of $H=L^2(\cam,\w)$ onto
$\ov{\ca_n\Om}$. Indeed, if $x\in\cam$ and $a\in\ca_n$, then {
\begin{align*}
\langle (x-F_n(x))\Om,a\Om\rangle
&=
\w\bigl((x-F_n(x))^*a\bigr)
=\w(x^*a)-\w\bigl(F_n(x)^*a\bigr)\\
&=\w(x^*a)-\w\bigl(F_n(x^*a)\bigr)=0.
\end{align*}}
The subspaces $\ov{\ca_n\Om}$ are increasing, { and
\(\ov{\onwl{\cup}\limits_{n\geq 0}\ca_n\Om}=H\)
because $\onwl{\cup}\limits_{n\geq 0}\ca_n$} is norm-dense in $\ca$ and $\Om$ is cyclic.
Hence $P_n\to I_H$ strongly, and {therefore
\(\|F_n(z)\Om-z\Om\|\longrightarrow 0\).}

But each $F_n(z)=z_n$ is scalar, so every vector $F_n(z)\Om$ lies in the
one-dimensional closed subspace $\bc\Om$. Therefore $z\Om\in\bc\Om$, say
$z\Om=c\Om$. Since $\Om$ is separating for $\cam$, we conclude that
\(z=c\one.\)
Thus $Z(\cam)=\bc\one$, and $\cam$ is a factor.
\end{proof}

We now compute the Connes spectrum of the modular group $\{\s_t^{\w}\}$. Let
\[
\sR=
\{(x,y)\in\sX\times\sX:\ \exists N \text{ such that } x_k=y_k \text{ for all } k\geq N\}
\]
be the tail equivalence relation on $\sX$. Under the canonical identification of $\ca$ with
the AF groupoid algebra of $\sR$, the pair $(\cam,\w)$ is the Feldman--Moore algebra of the
measured equivalence relation $(\sR,\m)$; see \cite{FeM1,FeM2}. If $(x,y)\in\sR$ and $N$ is
such that $x_k=y_k$ for all $k\geq N$, then the Radon--Nikod\'ym cocycle is given by
\begin{equation}\label{cocycle4}
D(x,y)=\frac{\p_{x_0}}{\p_{y_0}}
\prod_{k=0}^{N-1}\frac{P_{x_kx_{k+1}}}{P_{y_ky_{k+1}}}.
\end{equation}
Let $\rK(\sR,\m)$ denote the Krieger ratio set of this cocycle. By
\cite[Propositions~2.8 and~2.11]{FeM2}, the modular automorphism group of the canonical
state on the Feldman--Moore algebra is implemented by $D^{it}$, and
$S(\cam)\setminus\{0\}=\rK(\sR,\m)\setminus\{0\}$; see also \cite[Proposition~8.4]{FeM1} and
\cite[Th\'eor\`eme~3.3.1, D\'efinition~3.3.3]{ConnesIII}. Since $\cam$ is a factor,
$S(\cam)\setminus\{0\}=\exp\G(\s^{\w})$. Therefore
\begin{equation}\label{Connesratio4}
\G\bigl(\s^{\w}\bigr)
=
\log\bigl(\rK(\sR,\m)\setminus\{0\}\bigr).
\end{equation}

Let
\[
\sX_0=[0]=\{x\in\sX:\ x_0=0\}
\]
and let $\sR_0:=\sR|_{\sX_0}$ be the restricted tail relation. Since
$\m(\sX_0)=\p_0>0$ and every $\sR$-class meets $\sX_0$, the standard restriction property
of the ratio set gives
\begin{equation}\label{restrictionratio4}
\rK(\sR,\m)=\rK(\sR_0,\m|_{\sX_0}).
\end{equation}

\begin{lem}\label{kappalemma}
For an admissible word $\xi=(0,\xi_1,\dots,\xi_n)$ starting at $0$, define
\[
m(\xi):=\#\{0\le k<n:\ \xi_k=0,\ \xi_{k+1}=1\}.
\]
{Suppose $\xi$ and $\eta$ are admissible words of the same length, both start at
$0$, and end at the same symbol. Then
\(\m([\xi])/\m([\eta])=\kappa^{m(\xi)-m(\eta)}\).
In particular, \(\m([010])/\m([000])=\kappa\).}
\end{lem}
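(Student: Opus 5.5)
Since $\xi_0=\eta_0=0$, formula \eqref{cylinder} gives $\m([\xi])=\p_0\prod_{k<n}P_{\xi_k\xi_{k+1}}$, and the same for $\eta$. The factor $\p_0$ cancels in the ratio, so the task is to express the product of transition probabilities along an admissible word in terms of a few counts. Every admissible transition is one of $0\to0$ (weight $p$), $0\to1$ (weight $q$) or $1\to0$ (weight $1$). Write $a(\xi)$ for the number of $00$-steps and $m(\xi)$ for the number of $01$-steps. Then
\[
\prod_{k<n}P_{\xi_k\xi_{k+1}}=p^{a(\xi)}q^{m(\xi)}.
\]

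The next step relates $a(\xi)$ to $m(\xi)$. Because $11$ is forbidden, each $1$ in positions $1,\dots,n$ is entered by a $01$-step. Each $1$ in positions $0,\dots,n-1$ is left by a $10$-step, and $\xi_0=0$. So the number of $10$-steps equals $m(\xi)-\d_{\xi_n,1}$. The total number of steps is $n$, which gives
\[
a(\xi)=n-m(\xi)-\bigl(m(\xi)-\d_{\xi_n,1}\bigr)=n+\d_{\xi_n,1}-2m(\xi).
\]
Substituting, $\m([\xi])=\p_0\,p^{\,n+\d_{\xi_n,1}}\,(q/p^2)^{m(\xi)}=\p_0\,p^{\,n+\d_{\xi_n,1}}\kappa^{m(\xi)}$.

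For $\eta$ with the same length $n$ and the same last symbol, the prefactor $\p_0\,p^{\,n+\d_{\eta_n,1}}$ is identical. Dividing the two expressions gives $\kappa^{m(\xi)-m(\eta)}$. For the special case, $m(010)=1$ and $m(000)=0$, so the ratio is $\kappa$; directly, $q\cdot1/p^2=\kappa$.

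The only delicate step is the bookkeeping of the $10$-steps, namely the boundary correction $\d_{\xi_n,1}$. This correction is exactly why the hypothesis that $\xi$ and $\eta$ end in the same symbol is needed. I would check it on short words such as $01$ and $010$.
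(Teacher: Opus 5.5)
Your proposal is correct and follows the paper's proof. Both count the $00$-, $01$- and $10$-steps, use the exclusion of $11$ to get $m(\xi)-\delta_{\xi_n,1}$ steps of type $10$, and arrive at $\mu([\xi])=\pi_0\,p^{\,n+\delta_{\xi_n,1}}\kappa^{m(\xi)}$, whose prefactor cancels when the last symbols agree. The only difference is cosmetic: the paper splits into the cases $\xi_n=0$ and $\xi_n=1$, while you merge them with the boundary term $\delta_{\xi_n,1}$, as the paper itself later does in Lemma~\ref{weightformula-centralizer4}.
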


\begin{proof}
Assume first that $\xi$ ends at $0$. Since the only admissible transitions are $00$, $01$,
and $10$, the number of $10$ transitions in $\xi$ equals $m(\xi)$, and therefore the number
of $00$ transitions is $n-2m(\xi)$. Hence, {
\(\m([\xi])=\p_0\,p^{n-2m(\xi)}q^{m(\xi)}=\p_0\,p^n\kappa^{m(\xi)}\).}

If $\xi$ ends at $1$, then the number of $10$ transitions is $m(\xi)-1$, so the number of
$00$ transitions is $n+1-2m(\xi)$. Therefore, {
\(\m([\xi])=\p_0\,p^{n+1-2m(\xi)}q^{m(\xi)}=
\p_0\,p^{n+1}\kappa^{m(\xi)}\), and a similar}
formula holds for $\eta$. Since $\xi$ and $\eta$ have the same length and the same
terminal symbol, the common prefactor cancels, and we obtain the stated ratio formula.
Taking $\xi=010$ and $\eta=000$ gives the last claim.
\end{proof}

\begin{lem}\label{densitylemma4}
Let $\sA\subset \sX_0$ be measurable with $\m(\sA)>0$, and let $\delta>0$. Then there exists
an admissible word $u$ which starts and ends at $0$ such that {
\(\m(\sA\cap [u])>(1-\delta)\m([u])\).}
\end{lem}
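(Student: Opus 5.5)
This is a Lebesgue density statement for cylinders, and I will prove it with the martingale convergence theorem, or equivalently Lévy's upward theorem. Let $\mathfrak F_n$ be the $\sigma$-algebra generated by the cylinders $[\xi]$, $\xi\in W_n$. These increase, and together they generate the Borel $\sigma$-algebra of $\sX$, because cylinders form a base of the topology of the compact metrizable space $\sX$. Then
\[
\mathbb E_\mu(\chi_\sA\mid\mathfrak F_n)(x)=\frac{\mu(\sA\cap[x_0\dots x_n])}{\mu([x_0\dots x_n])},
\]
which is well defined because every admissible cylinder has positive measure. This martingale converges to $\chi_\sA$ $\mu$-almost surely as $n\to\infty$.

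Since $\mu(\sA)>0$, I can choose a point $x\in\sA$ with $x_0=0$ (recall $\sA\subset\sX_0$) at which convergence holds. For all large $n$ the ratio then exceeds $1-\delta$.

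It remains to make the word end at $0$. The point $x$ is admissible, so it never contains $11$. Hence among any two consecutive indices $n,n+1$ at least one has $x_k=0$. Consequently there are infinitely many $n$ with $x_n=0$, and in particular one beyond the threshold where the ratio exceeds $1-\delta$. Taking $u=(x_0,\dots,x_n)$ for such an $n$ gives an admissible word starting at $x_0=0$ and ending at $0$, with $\mu(\sA\cap[u])>(1-\delta)\mu([u])$.

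The only step that needs care is justifying the almost sure convergence. This is the standard martingale convergence theorem, together with the fact that the finite-cylinder $\sigma$-algebras generate the Borel sets. Alternatively one can use an elementary argument. Approximate $\sA$ in measure by a finite union $\sB$ of cylinders of some fixed length $N$, so that $\mu(\sA\triangle\sB)<\epsilon\mu(\sA)$. Refine these cylinders to length $N+1$ and, when needed, $N+2$, so that they all end at $0$; this refinement is possible because a cylinder ending in $1$ is the same set as its unique extension by $0$. If every such cylinder $[u]$ had $\mu(\sA\cap[u])\le(1-\delta)\mu([u])$, summing over them would give $\mu(\sA\cap\sB)\le(1-\delta)\mu(\sB)$. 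That contradicts $\mu(\sA\triangle\sB)<\epsilon\mu(\sA)$ once $\epsilon$ is chosen small relative to $\delta$. I would present this second, self-contained version, since it also shows directly that the cylinders can be taken to end at $0$.
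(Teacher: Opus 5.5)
Your proposal is correct. The martingale argument in your first paragraphs is the paper's proof: L\'evy's upward theorem for the cylinder filtration, a point $x\in\sA$ where the densities converge, and the fact that $x$ has infinitely many zeros, so you can stop at a prefix ending in $0$.

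The version you say you would actually present takes a genuinely different, more elementary route:
\begin{enumerate}
\item approximate $\sA$ by a finite disjoint union $\sB$ of cylinders;
\item rewrite each cylinder as one ending in $0$, using $[\xi]=[\xi 0]$ when $\xi$ ends in $1$;
\item conclude by averaging: if every piece had relative density at most $1-\delta$, so would $\sB$.
\end{enumerate}
This needs only the approximation lemma for the algebra of finite unions of cylinders, not martingale convergence. The price is that you get one good cylinder rather than the stronger pointwise statement. The paper's argument shows that the densities along the prefixes of almost every point of $\sA$ tend to $1$. The lemma, and its use in Proposition~\ref{ratiosetprop4}, need only existence, so nothing is lost.

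When you write it up, make two small points explicit.

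First, give a concrete $\epsilon$. Assume without loss of generality that $\delta<1$ and take $\epsilon=\delta/2$. Then $\mu(\sA\cap\sB)>(1-\delta/2)\mu(\sA)$ and $\mu(\sB)<(1+\delta/2)\mu(\sA)$. Hence $(1-\delta)\mu(\sB)<(1-\delta)(1+\delta/2)\mu(\sA)\le(1-\delta/2)\mu(\sA)<\mu(\sA\cap\sB)$, which is the required contradiction.

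Second, show that the cylinder $[u]$ you find also starts at $0$. This is automatic: $\mu(\sA\cap[u])>(1-\delta)\mu([u])>0$ together with $\sA\subset\sX_0$ forces $u_0=0$. Alternatively, replace $\sB$ by $\sB\cap\sX_0$ at the outset; this does not increase $\mu(\sA\triangle\sB)$.
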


\begin{proof}
Let $\gd_n$ be the finite $\s$-algebra generated by cylinders of length $n+1$ inside
$\sX_0$. By martingale convergence, {\(
\mathbb E(\chi_{\sA}\mid \gd_n)(x)\longrightarrow \chi_{\sA}(x)\)}
for $\m$-almost every $x\in \sX_0$. Choose $x\in \sA$ such that the above convergence holds.
Every admissible path contains infinitely many symbols $0$, so there exists $n\geq 0$ such
that $x_n=0$ and {\(\mathbb E(\chi_{\sA}\mid \gd_n)(x)>1-\delta\).}
If $u=(x_0,x_1,\dots,x_n)$ is the corresponding prefix, then $u$ starts and ends at $0$, and
the conditional expectation identity {yields
\(\m(\sA\cap [u])>(1-\delta)\m([u])\).}
\end{proof}

\begin{prop}\label{ratiosetprop4}
One has
\begin{equation}\label{ratiosetformula4}
\rK(\sR,\m)\setminus\{0\}=\kappa^{\bz}.
\end{equation}
Equivalently,
\begin{equation}\label{Connesspecformula4}
\G\bigl(\s^{\w}\bigr)=(\log\kappa)\bz.
\end{equation}\end{prop}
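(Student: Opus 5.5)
The plan is to prove the two inclusions $\rK(\sR,\m)\setminus\{0\}\subseteq\kappa^{\bz}$ and $\kappa\in\rK(\sR,\m)$. The reverse inclusion $\kappa^{\bz}\subseteq \rK(\sR,\m)\setminus\{0\}$ then follows because the nonzero part of the ratio set is a closed subgroup of $\br_+^*$. By \eqref{restrictionratio4} we may work throughout with $(\sR_0,\m|_{\sX_0})$. Once \eqref{ratiosetformula4} is established, \eqref{Connesspecformula4} is immediate from \eqref{Connesratio4}. The case $\kappa=1$ is included: both sides are then $\{1\}$ and $(\log\kappa)\bz=\{0\}$.

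\emph{Upper inclusion.} Let $(x,y)\in\sR_0$, and choose $N$ with $x_k=y_k$ for all $k\ge N$. Then \eqref{cocycle4} reads $D(x,y)=\m([x_0\cdots x_N])/\m([y_0\cdots y_N])$. The two words have the same length, both start at $0$, and both end at $x_N=y_N$. Lemma~\ref{kappalemma} therefore gives $D(x,y)=\kappa^{m(x_0\cdots x_N)-m(y_0\cdots y_N)}\in\kappa^{\bz}$. So the cocycle takes values only in the discrete closed set $\kappa^{\bz}$. The ratio set consists of essential values of $D$, which lie in the closure of its range together with $0$. Hence $\rK(\sR_0,\m)\setminus\{0\}\subseteq\kappa^{\bz}$.

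\emph{Showing $\kappa\in\rK$.} This is the main step. Fix a measurable $\sA\subset\sX_0$ with $\m(\sA)>0$, and take $\delta<\tfrac12\min\{p^2,q\}$. Lemma~\ref{densitylemma4} gives a word $u=(u_0,\dots,u_n)$ that starts and ends at $0$ and satisfies $\m(\sA\cap[u])>(1-\delta)\m([u])$. Consider the two sub-cylinders $[u00]$ and $[u10]$. Both are admissible because $u_n=0$, and by \eqref{cylinder}
\[
\m([u00])=p^2\m([u]),\qquad \m([u10])=q\m([u]).
\]
Define $\theta:[u00]\to[u10]$ by flipping coordinate $n+1$ from $0$ to $1$ and leaving all other coordinates unchanged. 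This map is a bijection with $(\theta x,x)\in\sR_0$. Conditional on the common prefix, it maps the Markov measure on tails starting at position $n+2$ (state $0$) identically. Hence $\m\circ\theta=\kappa\,\m$ on $[u00]$, and $D(\theta x,x)=\kappa$ for every $x$; this is also Lemma~\ref{kappalemma} applied to $u10$ versus $u00$.

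Now estimate the missing mass. The complement of $\sA$ in $[u]$ has measure less than $\delta\m([u])$. Therefore $\m([u00]\setminus\sA)<\delta\m([u])$, and
\[
\m\bigl([u00]\setminus\theta^{-1}\sA\bigr)=\kappa^{-1}\m([u10]\setminus\sA)<\kappa^{-1}\delta\m([u])=\frac{p^2}{q}\,\delta\,\m([u]).
\]
Since $\delta<q/2$, the second bound is less than $\tfrac12 p^2\m([u])$, and since $\delta<p^2/2$, the first bound is also less than $\tfrac12 p^2\m([u])$. The total excluded mass is thus less than $p^2\m([u])=\m([u00])$. Consequently $\sB=\sA\cap[u00]\cap\theta^{-1}\sA$ has positive measure, $\theta\sB\subset\sA$, and $D(\theta x,x)=\kappa$ exactly on $\sB$. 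This verifies Krieger's criterion for $\kappa$ to be an essential value, for every $\varepsilon$-neighbourhood of $\kappa$. The only delicate point is matching this construction with the precise form of Krieger's definition (a partial isomorphism of $\sA$ inside the relation with cocycle near $\kappa$), and the construction above is exactly of that form. Hence $\kappa\in\rK(\sR_0,\m)$.
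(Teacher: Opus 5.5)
Your proposal is correct and follows essentially the same route as the paper. It uses the upper inclusion via Lemma~\ref{kappalemma} on $\sR_0$, the flip $[u00]\to[u10]$ over a density cylinder from Lemma~\ref{densitylemma4} with constant derivative $\kappa$ (your $\delta<\tfrac12\min\{p^2,q\}$ is a harmless variant of the paper's $p^2-(1+\kappa^{-1})\delta>0$), and then the closed-subgroup property of the ratio set. At that last step the paper first derives ergodicity of $\sR_0$ from factoriality, whereas you invoke the subgroup property directly; this is legitimate, since the composition argument for essential values does not need ergodicity.
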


\begin{proof}
Let $(x,y)\in\sR_0$. Choose $N$ such that $x_k=y_k$ for all $k\geq N$, and {set
\(\xi=(x_0,\dots,x_N)\), \(\eta=(y_0,\dots,y_N)\).}
Then $\xi$ and $\eta$ have the same length, both start at $0$, and end at the same symbol.
Moreover, by \eqref{cocycle4}, {\(D(x,y)=\m([\xi])/\m([\eta])\).}

Lemma~\ref{kappalemma} therefore implies that every cocycle value on $\sR_0$ belongs to
$\kappa^{\bz}$. Hence
\begin{equation}\label{ratioUpper4}
\rK(\sR_0,\m|_{\sX_0})\setminus\{0\}\subset\kappa^{\bz}.
\end{equation}

Conversely, we show that $\kappa$ itself belongs to the ratio set. Let
$\sA\subset \sX_0$ be measurable with $\m(\sA)>0$. Choose $\delta>0$ so small that
\begin{equation}\label{deltachoice4}
p^2-(1+\kappa^{-1})\delta>0.
\end{equation}
By Lemma~\ref{densitylemma4}, there exists an admissible word $u$ which starts and ends at $0$
such that
\begin{equation}\label{densitycyl4}
\m(\sA\cap [u])>(1-\delta)\m([u]).
\end{equation}
Define a partial isomorphism {\(T_u:[u00]\to [u10]\), with \(T_u(u00z)=u10z\). The
graph of $T_u$} is contained in $\sR_0$, and the Radon--Nikod\'ym derivative is constant:
\begin{equation}\label{constantder4}
\frac{d(\m\circ T_u)}{d\m}(x)=\kappa,
\qquad x\in [u00].
\end{equation}
Indeed, {\(\m([u00])=p^2\m([u])\), and \(\m([u10])=q\m([u])=\kappa\,\m([u00])\).}

Now put
\[
\sB:=\sA\cap [u00]\cap T_u^{-1}(\sA\cap [u10]).
\]
Using \eqref{deltachoice4}, \eqref{densitycyl4} and \eqref{constantder4}, we obtain {
\begin{align*}
\m(\sB)
&\geq
\m([u00])
-\m([u00]\setminus \sA)
-\m\bigl([u00]\setminus T_u^{-1}(\sA\cap [u10])\bigr)\\
&\geq
p^2\m([u])
-\delta\m([u])
-\kappa^{-1}\delta\m([u])>0.
\end{align*}
Therefore,} $T_u$ sends a positive-measure subset of $\sA$ into
$\sA$ and has {the derivative $\kappa$ there. Thus,
\(\kappa\in\rK(\sR_0,\m|_{\sX_0})\).

By Proposition~\ref{factorprop4}, $\cam$ is a factor. Hence, relation $\sR$ is ergodic, and
so} is $\sR_0$; see \cite{FeM1,FeM2}. For an ergodic measured equivalence relation the
nonzero part of the ratio set is a closed subgroup of $(0,\infty)$. Therefore, {
\(\kappa^{\bz}\subset \rK(\sR_0,\m|_{\sX_0})\setminus\{0\}\).}
Together with \eqref{ratioUpper4} and \eqref{restrictionratio4}, this proves
\eqref{ratiosetformula4}. Finally, \eqref{Connesspecformula4} follows from
\eqref{Connesratio4} by taking logarithms.
\end{proof}\vskip .5 cm

Summarizing, we obtain the result announced in Theorem 1: \vskip .5 cm

\begin{thm}\label{typeclassification4} {\rm{(i)}} If $\kappa\neq 1$, then
$\cam$ is an {\rm{HFF}} of type $\mathrm{III}_{\lambda}$.

{\rm{(ii)}} If {$\kappa=1$, then} $\cam$ is an {\rm{HFF}} of type
$\mathrm{II}_1$.\end{thm}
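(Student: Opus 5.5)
The theorem follows by combining Proposition~\ref{factorprop4} with Proposition~\ref{ratiosetprop4}. By Proposition~\ref{factorprop4}, $\cam$ is a hyperfinite factor in both cases. It is infinite-dimensional, because the dimensions of $\ca_n$ grow without bound (Fibonacci growth of $|W_n|$) and $\ca\subset\cam$ faithfully since $\w$ is faithful. The only remaining task is to identify the type. For this we use the Connes spectrum computed in \eqref{Connesspecformula4}, together with the characterizations recalled in Section~\ref{Sect2}.

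\emph{Case (i), $\kappa\neq1$.} Here $\log\kappa\neq0$, and \eqref{Connesspecformula4} gives $\G(\s^{\w})=(\log\kappa)\bz=(\log\lambda)\bz$, because $\log\kappa^{-1}=-\log\kappa$ and $(\log\kappa)\bz=(-\log\kappa)\bz$. Since $\lambda=\min\{\kappa,\kappa^{-1}\}\in(0,1)$, the subgroup is a nontrivial proper closed subgroup of $\br$. By the criterion in Section~\ref{Sect2}, $\cam$ is of type $\mathrm{III}_\lambda$. One subtlety needs checking: the criterion presupposes that $\cam$ is a type $\mathrm{III}$ factor. This can be secured by noting that a semifinite factor has Connes spectrum $\br$ (equivalently, $S(\cam)=\{1\}$ and $\G=\{0\}$ in the Connes/Takesaki convention $S(\cam)\setminus\{0\}=\exp\G$). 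Since $S(\cam)\setminus\{0\}=\kappa^{\bz}\neq\{1\}$, $\cam$ cannot be semifinite.

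\emph{Case (ii), $\kappa=1$.} Here I would not pass through the ratio set but would exhibit the trace directly, which is cleaner. By Lemma~\ref{kappalemma}, $\m([\xi])=\m([\eta])$ whenever $\xi,\eta$ start at $0$, have the same length, and share the terminal symbol. The obstacle is words starting at $1$. For $\xi=1\xi'$ with $\xi'=(0,\xi_2,\dots)$, we have $\m([\xi])=\p_1\cdot1\cdot\m([\xi'])/\p_0 = q\,\m([\xi'])$. For a word $\eta=0\eta'$ we get $\m([\eta])=\p_0 P_{0\eta_1}\cdots$. I would therefore verify directly, using $q=p^2$ and $\p_1/\p_0=q$, the general formula $\m([\xi])=\p_0\,p^{n}\kappa^{m'(\xi)}$ times a factor depending only on $\xi_n$, where $m'$ counts occurrences of the symbol $1$. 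When $\kappa=1$, this means $\m([\xi])$ depends only on $(n,\xi_n)$. Concretely, each $1$ in the word contributes $q$ and removes two factors $p$, while a leading $1$ contributes $\p_1/\p_0=q$ and removes one $p$ compared with a leading $0$, and this is exactly what must be checked.

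Given this, Proposition~\ref{modform4} yields $\s^{\w}_t(e^{(n)}_{\xi,\eta})=e^{(n)}_{\xi,\eta}$ for all matrix units, hence $\s^{\w}=\mathrm{id}$ on $\cam$ by weak density and normality. Consequently $\w$ is a faithful normal tracial state, so $\cam$ is finite. An infinite-dimensional finite hyperfinite factor is the HFF of type $\mathrm{II}_1$, as recalled in Section~\ref{Sect2}.

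If the leading-symbol check fails, the fallback is to note that $\G(\s^{\w})=\{0\}$ and $S(\cam)=\{1\}$ already force semifiniteness. Finiteness would then follow from the restriction to the corner $\rp^{(0)}_0$, where Lemma~\ref{kappalemma} makes $\w$ tracial on $\rp^{(0)}_0\cam\rp^{(0)}_0$, combined with the factor property. The main obstacle is this boundary bookkeeping in case (ii).
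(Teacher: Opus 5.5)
\textbf{Part (i)} is the paper's argument, and your extra check that $\cam$ cannot be semifinite is correct. One slip there: a semifinite factor has $\G=\{0\}$ and $S(\cam)=\{1\}$, not Connes spectrum $\br$.

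\textbf{Part (ii) has a genuine gap.} The leading-symbol check you flag fails, so $\w$ is \emph{not} a trace when $\k=1$. The correct weight formula (Lemma~\ref{weightformula-centralizer4}) is $\m([\xi])=\p_0\,p^{\,n+\d_{\xi_0,1}+\d_{\xi_n,1}}\k^{\,m(\xi)+\d_{\xi_0,1}}$, so your formula is missing the factor $p^{\d_{\xi_0,1}}$. By your own bookkeeping, a leading $1$ contributes $\p_1/\p_0=q$ but removes only one $p$. The net factor is $q/p=\k p$, which equals $p\neq1$ when $\k=1$. Concretely, $\m([00])=\p_0p$ while $\m([10])=\p_1=\p_0p^2$. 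Hence $\s^{\w}_t(e^{(1)}_{00,10})=p^{-it}e^{(1)}_{00,10}$ and $\w(e^{(1)}_{00,10}e^{(1)}_{10,00})=\p_0p\neq\p_0p^2=\w(e^{(1)}_{10,00}e^{(1)}_{00,10})$. The paper handles exactly this defect. At $\k=1$ the cocycle is the coboundary $D(x,y)=h(x)/h(y)$ with $h(x)=p^{\d_{x_0,1}}$. So the reweighted measure $\rd\n=h^{-1}\rd\m$ is $\sR$-invariant and gives a faithful normal tracial state. Equivalently, use the normalized functional $x\mapsto\w(kx)$ with $k=\rp^{(0)}_0+p^{-1}\rp^{(0)}_1$, which lies in the centralizer.

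\textbf{Your fallback does not close the gap as written.} The claim that $\G(\s^{\w})=\{0\}$ forces semifiniteness is false, since type $\mathrm{III}_0$ factors also have $\G=\{0\}$. Proposition~\ref{ratiosetprop4} only gives $\rK(\sR,\m)\setminus\{0\}=\{1\}$, i.e.\ $S(\cam)\setminus\{0\}=\{1\}$. It says nothing about whether $0\in S(\cam)$.

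The corner argument is sound; it is the paper's coboundary restricted to $\sX_0$, where $h\equiv 1$. By Lemma~\ref{kappalemma}, every $e^{(n)}_{\xi,\eta}$ with $\xi_0=\eta_0=0$ is $\s^{\w}$-fixed. So $\rp^{(0)}_0\cam\rp^{(0)}_0\subset\cam^{\w}$, and $\w$ is a faithful normal trace there. This shows $\rp^{(0)}_0$ is a finite projection, hence $\cam$ is not of type III.

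However, a factor with one nonzero finite projection is only semifinite; a $\mathrm{II}_\infty$ factor has many such projections. So ``combined with the factor property'' does not yield finiteness. You must also show that $\rp^{(0)}_1=1-\rp^{(0)}_0$ is finite. For instance, $\rp^{(0)}_1=e^{(1)}_{10,10}$, and the partial isometry $e^{(1)}_{00,10}\in\ca_1$ gives $\rp^{(0)}_1\sim e^{(1)}_{00,00}\le\rp^{(0)}_0$. Then $1$ is a sum of two finite projections, hence finite. With that step, the corner route is a valid alternative to the paper's global reweighting; without it, (ii) is not proved.
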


\begin{proof} {(i)
Assume first that $\kappa\neq 1$. Proposition~\ref{ratiosetprop4} yields that
\(
\G\bigl(\s^{\w}\bigr)=(\log\kappa)\bz=(\log\lambda)\bz\).}
Since $\cam$ is a factor, the classification recalled in Section~2 implies that $\cam$ is of
type $\mathrm{III}_{\lambda}$.

{(ii). Now let $\kappa=1$ $\Llra$ $q=p^2$}. In this case, for every admissible word
$\xi=(\xi_0,\dots,\xi_n)$ ending at $j\in\{0,1\}$ one has {
\(\m([\xi])=\p_0\,p^{n+\d_{j,1}+\d_{\xi_0,1}}\).}
Hence, if $(x,y)\in \sR$ and $x_k=y_k$ for all $k\geq N$, then
\begin{equation}\label{coboundary4}
D(x,y)=p^{\d_{x_0,1}-\d_{y_0,1}}=\frac{h(x)}{h(y)},
\qquad
h(x):=p^{\d_{x_0,1}}.
\end{equation}
Thus the Radon--Nikod\'ym cocycle is a coboundary. Define a finite measure $\n$ on $\sX$ by
{\(\rd\n=h^{-1}\,\rd\m\).}
Equation \eqref{coboundary4} implies that $\n$ is $\sR$-invariant $\Lra$ the normal state
on $\cam$ induced by $\n$ is a tracial FNS $\Lra$ $\cam$ is a finite factor.
Since $\cam$ is infinite-dimensional and hyperfinite, it is the hyperfinite factor of type
$\mathrm{II}_1$.
\end{proof}

\medskip
\noindent{\bf Examples.} {
(1) If $p=q=1/2$, then $\kappa=2$ $\Lra$ $\cam$ is of type $\mathrm{III}_{1/2}$.

\medskip
\noindent(2) If \(p=(\sqrt5-1)/2\), then $\kappa =1$ $\Lra$ $\cam$ is of type
$\mathrm{II}_1$.

\medskip
\noindent(3) If \(p=(3-\sqrt5)/2\), then \(\kappa=2+\sqrt5\), $\Lra$ $\cam$ has type
$\mathrm{III}_{\sqrt5-2}$.}

\section{The centralizer and the flow of weights}\label{Sect5}

{In this section we analyse the centralizer \(\cam^{\w}:=\{x\in \cam:\s_t^{\w}(x)=x \text{ for all } t\in\br\}\) of state $\w$ and the flow of weights for} factor $\cam$.

\begin{lem}\label{weightformula-centralizer4}
For every admissible word $\xi=(\xi_0,\dots,\xi_n)\in W_n$, {
\[\m([\xi])=\p_0\,p^{\,n+\d_{\xi_0,1}+\d_{\xi_n,1}}\kappa^{\,m(\xi)+\d_{\xi_0,1}},\]
where}
\[
m(\xi):=\#\{0\le k<n:\ \xi_k=0,\ \xi_{k+1}=1\}.
\]
Consequently, if $\xi,\eta\in W_n$ have the same terminal symbol, then
\[
\frac{\m([\xi])}{\m([\eta])}
=
p^{\,\d_{\xi_0,1}-\d_{\eta_0,1}}
\kappa^{\,m(\xi)-m(\eta)+\d_{\xi_0,1}-\d_{\eta_0,1}}.
\]
\end{lem}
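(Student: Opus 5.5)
The formula follows from counting the transitions in $\xi$ and substituting into the cylinder formula \eqref{cylinder}. This extends the argument of Lemma~\ref{kappalemma}, which treated only words starting at $0$.

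Write $a$, $b$, $c$ for the numbers of transitions $00$, $01$, $10$ in $\xi$. These are the only admissible transitions, so $a+b+c=n$, and by definition $b=m(\xi)$.

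The key combinatorial step is to relate $c$ to $b$.
\begin{itemize}
\item Every occurrence of $1$ in $\xi$ other than at position $0$ is preceded by a $0$. So the number of $01$ transitions is the number of $1$'s minus $\d_{\xi_0,1}$.
\item Every occurrence of $1$ other than at position $n$ is followed by a $0$. So the number of $10$ transitions is the number of $1$'s minus $\d_{\xi_n,1}$.
\end{itemize}
Hence
\[
c=m(\xi)+\d_{\xi_0,1}-\d_{\xi_n,1},
\qquad
a=n-2m(\xi)-\d_{\xi_0,1}+\d_{\xi_n,1}.
\]
Since $P_{10}=1$, formula \eqref{cylinder} gives $\m([\xi])=\p_{\xi_0}\,p^{a}q^{m(\xi)}$.

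Next, normalise the initial factor. From $\p_1=q\p_0$ and $q=p^2\kappa$ we get
\[
\p_{\xi_0}=\p_0\,(p^2\kappa)^{\d_{\xi_0,1}},
\qquad
q^{m(\xi)}=p^{2m(\xi)}\kappa^{m(\xi)}.
\]
Substituting and collecting powers of $p$, the exponent of $p$ is
\[
2\d_{\xi_0,1}+n-2m(\xi)-\d_{\xi_0,1}+\d_{\xi_n,1}+2m(\xi)=n+\d_{\xi_0,1}+\d_{\xi_n,1}.
\]
The exponent of $\kappa$ is $m(\xi)+\d_{\xi_0,1}$. This is exactly the claimed formula. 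As a consistency check, for $\xi_0=0$ it reduces to the two cases in the proof of Lemma~\ref{kappalemma}.

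For the ratio, take $\xi,\eta\in W_n$ with $\xi_n=\eta_n$. The common factor $\p_0\,p^{\,n+\d_{\xi_n,1}}$ cancels, leaving
\[
\frac{\m([\xi])}{\m([\eta])}
=p^{\,\d_{\xi_0,1}-\d_{\eta_0,1}}\,
\kappa^{\,m(\xi)-m(\eta)+\d_{\xi_0,1}-\d_{\eta_0,1}}.
\]

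The only point needing care is the boundary bookkeeping relating $c$ to $m(\xi)$ through the end symbols $\xi_0$ and $\xi_n$. Everything else is routine algebra.
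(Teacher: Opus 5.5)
Your proof is correct and follows the paper's argument: you count the $00$, $01$, $10$ transitions, use the boundary balance $N_{10}=m(\xi)+\delta_{\xi_0,1}-\delta_{\xi_n,1}$, and substitute $\pi_1=q\pi_0$ and $q=\kappa p^2$ into the cylinder formula. Your justification of that balance by counting the $1$'s (each non-initial $1$ is preceded by a $0$, each non-terminal $1$ is followed by a $0$) fills in a step the paper only asserts.
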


\begin{proof}
Let
\[
N_{01}(\xi):=m(\xi),
\qquad
N_{10}(\xi):=\#\{0\le k<n:\ \xi_k=1,\ \xi_{k+1}=0\},
\]
and
\[
N_{00}(\xi):=\#\{0\le k<n:\ \xi_k=0,\ \xi_{k+1}=0\}.
\]
Since the only admissible transitions are $00$, $01$, and $10$, one has{
\[
N_{01}(\xi)-N_{10}(\xi)=\d_{\xi_n,1}-\d_{\xi_0,1},\quad\hbox{hence}\quad
N_{10}(\xi)=m(\xi)+\d_{\xi_0,1}-\d_{\xi_n,1}.\]
Therefore,}
\[
N_{00}(\xi)
=
n-N_{01}(\xi)-N_{10}(\xi)
=
n-2m(\xi)-\d_{\xi_0,1}+\d_{\xi_n,1}.
\]
Using $\p_1=\p_0 q$ and $q=\kappa p^2$, we obtain { that $\m([\xi])$ equals
\begin{align*}
\p_{\xi_0}\,p^{N_{00}(\xi)}q^{N_{01}(\xi)}
=\p_0\,q^{\d_{\xi_0,1}}
p^{\,n-2m(\xi)-\d_{\xi_0,1}+\d_{\xi_n,1}}q^{\,m(\xi)}
=\p_0\,p^{\,n+\d_{\xi_0,1}+\d_{\xi_n,1}}
\kappa^{\,m(\xi)+\d_{\xi_0,1}}.
\end{align*}
The} ratio formula follows immediately.
\end{proof}

\begin{prop}\label{centralizer-description4}
For every $n\geq 0$, set
\[
\cc_n
:=
\operatorname{span}
\Bigl\{
e^{(n)}_{\xi,\eta}:
\xi,\eta\in W_n,\ \xi_n=\eta_n,\ \m([\xi])=\m([\eta])
\Bigr\}.
\]
Then each $\cc_n$ is a finite-dimensional $C^*$-subalgebra of $\ca_n$, and
\[
\cc_n=\ca_n\cap \cam^{\w},
\qquad
\iota_n(\cc_n)\subset \cc_{n+1},
\qquad{
\cam^{\w}=\left(\onwl{\cup}\limits_{n\geq 0} \cc_n\right)''}.
\]
In particular, $\cam^{\w}$ is a finite hyperfinite von Neumann algebra, and the restriction
$\w|_{\cam^{\w}}$ is {an FNS.}
\end{prop}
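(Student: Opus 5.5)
The plan is to use the explicit modular formula of Proposition~\ref{modform4} together with the $\w$-preserving conditional expectations $F_n:\cam\to\ca_n$ from the proof of Proposition~\ref{factorprop4}.

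\textbf{Step 1: $\cc_n$ is a $*$-subalgebra.} The spanning set of $\cc_n$ is closed under adjoints. It is also closed under products, since
\[
e^{(n)}_{\xi,\eta}e^{(n)}_{\zeta,\tau}=\d_{\eta,\zeta}e^{(n)}_{\xi,\tau},
\]
and the relation ``$\xi_n=\eta_n$ and $\m([\xi])=\m([\eta])$'' is an equivalence relation on $W_n$. Concretely, $\cc_n$ is the direct sum of full matrix algebras over the classes of this relation, so it is a finite-dimensional $C^*$-algebra.

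\textbf{Step 2: $\cc_n=\ca_n\cap\cam^{\w}$.} For $x=\sum c_{\xi\eta}e^{(n)}_{\xi,\eta}\in\ca_n$, formula \eqref{modunits4} gives
\[
\s^{\w}_t(x)=\sum c_{\xi\eta}\bigl(\m([\xi])/\m([\eta])\bigr)^{it}e^{(n)}_{\xi,\eta}.
\]
The matrix units are linearly independent, so $\s_t^{\w}(x)=x$ for all $t$ exactly when $c_{\xi\eta}=0$ whenever $\m([\xi])\neq\m([\eta])$. This is the first identity.

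\textbf{Step 3: $\iota_n(\cc_n)\subset\cc_{n+1}$.} By \eqref{embed}, $\iota_n$ sends $e^{(n)}_{\xi,\eta}$ to $\sum_k e^{(n+1)}_{\xi k,\eta k}$. Since $\xi_n=\eta_n$, we have $\m([\xi k])=\m([\xi])P_{\xi_nk}$ and $\m([\eta k])=\m([\eta])P_{\eta_nk}$, with the same factor $P_{\xi_nk}$ on both sides, so equality of the weights is preserved. (Alternatively, this follows at once from Step 2, because $\ca_n\subset\ca_{n+1}$.)

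\textbf{Step 4: density, the main step.} The inclusion $\bigl(\bigcup_n\cc_n\bigr)''\subset\cam^{\w}$ is clear. For the converse, take $x\in\cam^{\w}$. Since $\s^{\w}$ leaves $\ca_n$ invariant, Takesaki's conditional expectation satisfies $F_n\circ\s^{\w}_t=\s^{\w}_t\circ F_n$. Hence $F_n(x)\in\ca_n\cap\cam^{\w}=\cc_n$. As in the proof of Proposition~\ref{factorprop4}, $F_n(x)\Om=P_nx\Om\to x\Om$, and $\|F_n(x)\|\le\|x\|$. A bounded sequence converging on the cyclic and separating vector $\Om$ converges strongly on $\cam'\Om$. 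It therefore converges strongly on $H$, so $x$ lies in the strong closure of $\bigcup_n\cc_n$.

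The main obstacle I expect is justifying this strong convergence. I would handle it by checking that $F_n(x)y'\Om=y'F_n(x)\Om\to y'x\Om$ for every $y'\in\cam'$, which gives convergence on the dense set $\cam'\Om$; the uniform bound $\|F_n(x)\|\le\|x\|$ then upgrades this to strong convergence on all of $H$.

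\textbf{Step 5: finiteness.} The state $\w$ restricted to $\cam^{\w}$ is a faithful normal state, and it is tracial, because $\cam^{\w}$ is the centralizer of $\w$. Directly, on $\cc_n$ one has $\w(e^{(n)}_{\xi,\eta}e^{(n)}_{\eta,\xi})=\m([\xi])=\m([\eta])=\w(e^{(n)}_{\eta,\xi}e^{(n)}_{\xi,\eta})$, and this extends to $\cam^{\w}$ by normality and Step 4. A von Neumann algebra with a faithful normal tracial state is finite. Hyperfiniteness follows from Step 4, since $\cam^{\w}$ is generated by the increasing sequence of finite-dimensional algebras $\cc_n$.
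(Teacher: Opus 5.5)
Your proposal is correct and follows the paper's proof essentially step for step. The modular formula \eqref{modunits4} gives $\cc_n=\ca_n\cap\cam^{\w}$ and $\iota_n(\cc_n)\subset\cc_{n+1}$; Takesaki's expectation $F_n$ commutes with $\s^{\w}$, so $F_n(\cam^{\w})\subset\cc_n$ and $F_n(x)\to x$ strongly; and $\w$ is tracial on the centralizer. The one small difference is that you obtain the strong convergence on the dense set $\cam'\Om$, stating the bound $\|F_n(x)\|\le\|x\|$ explicitly, whereas the paper uses $\ca_n$-bimodularity on $\bigcup_m\ca_m\Om$ and leaves that bound implicit; both arguments are valid.
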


\begin{proof}
By Proposition~\ref{modform4},
\[
\s_t^{\w}\bigl(e^{(n)}_{\xi,\eta}\bigr)
=
\left(\frac{\m([\xi])}{\m([\eta])}\right)^{it}
e^{(n)}_{\xi,\eta},
\qquad t\in\br.
\]
Hence $e^{(n)}_{\xi,\eta}\in\cam^{\w}$ if and only if $\m([\xi])=\m([\eta])$.
Since the matrix units span $\ca_n$, it follows that {
\(\cc_n=\ca_n\cap \cam^{\w}\).}
In particular, each $\cc_n$ is a finite-dimensional $C^*$-subalgebra of $\ca_n$.

Next, if $\m([\xi])=\m([\eta])$ and $k$ is an admissible one-step extension of the common
terminal symbol $\xi_n=\eta_n$, then {\(\m([\xi k])=\m([\xi])P_{\xi_n k}
=\m([\eta])P_{\eta_n k}=\m([\eta k]).\)}
Therefore $\iota_n(\cc_n)\subset \cc_{n+1}$.

Let $F_n:\cam\to\ca_n$ be the $\w$-preserving normal conditional expectation from
Proposition~\ref{factorprop4}. Since $\s_t^{\w}(\ca_n)=\ca_n$ by Proposition~\ref{modform4},
Takesaki's theorem implies that $F_n$ commutes with the modular {group:
\(F_n\circ \s_t^{\w}=\s_t^{\w}\circ F_n\), \(t\in\br\). Hence,
\(F_n(\cam^{\w})\subset \ca_n\cap \cam^{\w}=\cc_n\).}

Let $(H,\pi,\Om)$ be the GNS triple of $(\ca,\w)$, and identify {$\cam\subset\cb(H)$.}
For $x\in\cam$, define
\[
P_n(x\Om):=F_n(x)\Om.
\]
{As} in the proof of Proposition~\ref{factorprop4}, $P_n$ is the orthogonal
projection of $H$ onto $\ov{\ca_n\Om}$, and therefore {
\(F_n(y)\Om\longrightarrow y\Om\) as $n\to\infty$ for every  \(y\in\cam\).}

Now put {\(\cn:=\left(\onwl{\cup}\limits_{n\geq 0} \cc_n\right)''\).
If $x\in \cam^{\w}$, then $F_n(x)\in \cc_n\subset\cn$ for every $n$. Suppose
$a\in\onwl{\cup}\limits_{m\geq 0}\ca_m$ and choose $m_0$ such that $a\in \ca_{m_0}$. Then
$a\in\ca_n$ for every $n\geq m_0$, and, as} $F_n$ is $\ca_n$-bimodular,
\[
(F_n(x)-x)a\Om
=
(F_n(xa)-xa)\Om
\longrightarrow 0.
\]
Applying the same argument to $x^*\in \cam^{\w}$, we also get
\[
(F_n(x)-x)^*a\Om
\longrightarrow 0.
\]
Since {$\onwl{\cup}\limits_{m\geq 0}\ca_m\Om$} is dense in $H$, it follows that
$F_n(x)$ converges {strongly* to $x$. As $\cn$ is strongly* closed, and
$F_n(x)\in \cn$, we conclude that $x\in\cn$. Thus, \(\cam^{\w}\subset \cn\).
The reverse inclusion is immediate, because each $\cc_n\subset \cam^{\w}$ and $\cam^{\w}$ is
weakly closed. Therefore, \(\cam^{\w}=\left(\onwl{\cup}\limits_{n\geq 0} \cc_n\right)''\).}

Since $\cam^{\w}$ is generated by an increasing sequence of finite-dimensional algebras,
it is hyperfinite. Finally, if $x,y\in \cam^{\w}$, then $\s_t^{\w}(x)=x$ for all $t\in\br$,
so $x$ is entire analytic and $\s_{-i}^{\w}(x)=x$. By the KMS condition,
\(
\w(xy)=\w\bigl(y\,\s_{-i}^{\w}(x)\bigr)=\w(yx)\).
Hence $\w|_{\cam^{\w}}$ is tracial. It is obviously {an FNS,} so $\cam^{\w}$ is
finite.\end{proof}\vskip .5cm

\begin{rem}
Under the Feldman--Moore identification of $(\cam,\w)$ with the measured tail equivalence
relation $(\sR,\m)$, the algebra $\cam^{\w}$ is the Feldman--Moore algebra of the kernel
subrelation
\[
\sR_{\mathrm{mod}}
=
\{(x,y)\in \sR:\ D(x,y)=1\}.
\]
Indeed, the basic compact bisection corresponding to $e^{(n)}_{\xi,\eta}$ is fixed by the
modular group {iff $\m([\xi])=\m([\eta])$ $\Llra$ iff} $D=1$ on
that bisection. 
\end{rem}
\vskip .5cm

\begin{rem} 
Lemma~\ref{weightformula-centralizer4} yields a concrete block description of the
finite-dimensional algebras $\cc_n$.
\begin{enumerate}
\item[\rm (i)]
If $\kappa=1$, then for $\xi,\eta\in W_n$ with $\xi_n=\eta_n$, {
\(\m([\xi])=\m([\eta])\), iff \(\xi_0=\eta_0\).} Therefore, if
\[
W_n(i,j):=\{\xi\in W_n:\ \xi_0=i,\ \xi_n=j\},
\qquad i,j\in\{0,1\},\]
then {
\[\cc_n=\onwl{\oplus}\limits_{i,j\in\{0,1\}}\cb\bigl(\ell^2(W_n(i,j))\bigr),\]}
where empty summands are omitted.

\item[\rm (ii)]
Assume that $\kappa\ne 1$ and $p\notin \kappa^{\bz}$. Then for $\xi,\eta\in W_n$
with $\xi_n=\eta_n$,
\[
\m([\xi])=\m([\eta])
\iff
\bigl(\xi_0=\eta_0 \text{ and } m(\xi)=m(\eta)\bigr).
\]
Hence, with
\[
W_n(i,j,m):=\{\xi\in W_n:\ \xi_0=i,\ \xi_n=j,\ m(\xi)=m\},
\]
one has
\[\cc_n={\onwl{\oplus}\limits_{i,j\in\{0,1\}}\onwl{\oplus}\limits_{m\geq 0}
\cb\bigl(\ell^2(W_n(i,j,m))\bigr)},\]
again omitting empty summands.

\item[\rm (iii)]
Assume that $\kappa\ne 1$ and $p=\kappa^\ell$ for some $\ell\in\bz$. Then for
$\xi,\eta\in W_n$ with $\xi_n=\eta_n$,
\[
\m([\xi])=\m([\eta])
\iff
m(\xi)+(\ell+1)\d_{\xi_0,1}
=
m(\eta)+(\ell+1)\d_{\eta_0,1}.
\]
Thus, if
\[E_n(j,r):=\{\xi\in W_n(j):\ m(\xi)+(\ell+1)\d_{\xi_0,1}=r\},
\qquad j\in\{0,1\},\ r\in\bz,\]
then {
\[\cc_n=\onwl{\oplus}\limits_{j=0,1}\onwl{\oplus}\limits_{r\in\bz}\cb
\bigl(\ell^2(E_n(j,r))\bigr),\]}
with empty summands omitted. \end{enumerate}
\end{rem}

\begin{thm}\label{centralizer-factor4}{
The following assertions hold} true:
\begin{enumerate}
\item[\rm (i)] if \(p\notin \kappa^{\mathbb Z}\), then \(\cam^{\w}\) is not a factor;
\item[\rm (ii)] if \(p=\kappa^\ell\) for some \(\ell\in\mathbb Z\), then \(\cam^{\w}\) is an
{\rm{HFF}} of type {\rm{II}}$_1$.
\end{enumerate}
\end{thm}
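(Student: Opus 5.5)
The plan is to work through the Feldman--Moore picture of the Remark above: $\cam^{\w}$ is the algebra of the kernel subrelation $\sR_{\mathrm{mod}}=\{(x,y)\in\sR:\ D(x,y)=1\}$. Proposition~\ref{centralizer-description4} already gives $\cam^{\w}=\bigl(\bigcup_n\cc_n\bigr)''$ and says it is finite and hyperfinite with faithful trace $\w|_{\cam^{\w}}$. So the only question is whether $\cam^{\w}$ is a factor. By Feldman--Moore this is equivalent to ergodicity of $\sR_{\mathrm{mod}}$ with respect to $\m$. Throughout I use the cocycle formula coming from Lemma~\ref{weightformula-centralizer4}. If $x_k=y_k$ for $k\ge N$, then
\[
D(x,y)=p^{\d_{x_0,1}-\d_{y_0,1}}\kappa^{\,m(x_0\cdots x_N)-m(y_0\cdots y_N)+\d_{x_0,1}-\d_{y_0,1}}.
\]

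For (i), assume $p\notin\kappa^{\bz}$. Then $D(x,y)=1$ forces $\d_{x_0,1}=\d_{y_0,1}$, since otherwise $p^{\pm1}\in\kappa^{\bz}$; this is item (ii) of the preceding Remark when $\kappa\neq1$. If $\kappa=1$ then $p\notin\kappa^{\bz}=\{1\}$ holds automatically, and item (i) of the Remark gives the same conclusion $\xi_0=\eta_0$. Now take the projection $e^{(0)}_{0,0}\leftrightarrow\chi_{[0]}$. It lies in every $\cc_n$, being diagonal. It commutes with every generator $e^{(n)}_{\xi,\eta}$ of $\cc_n$, because $\xi_0=\eta_0$ there: indeed $e^{(0)}_{0,0}=\sum_{\zeta\in W_n,\,\zeta_0=0}e^{(n)}_{\zeta,\zeta}$, and both $\xi,\eta$ are either in this set or outside it. Hence $e^{(0)}_{0,0}$ commutes with $\bigcup_n\cc_n$ and therefore with its weak closure. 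So it is central in $\cam^{\w}$, and it is neither $0$ nor $\one$ since $0<\p_0<1$. Therefore $\cam^{\w}$ is not a factor.

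For (ii), assume $p=\kappa^\ell$, so every cylinder ratio $\m([\xi])/\m([\eta])$ lies in $\kappa^{\bz}$, with no restriction on lengths or endpoints. I will prove ergodicity of $\sR_{\mathrm{mod}}$ directly. Let $\sA,\sB\subset\sX$ have positive measure. By the martingale argument of Lemma~\ref{densitylemma4} (which works equally without the restriction to $\sX_0$), choose words $u,v$ ending at $0$ with $\m(\sA\cap[u])>(1-\delta)\m([u])$ and $\m(\sB\cap[v])>(1-\delta)\m([v])$. If $|v|>|u|$, first append $0$'s to $u$ by refining: density at a cylinder persists on a large-measure portion of its sub-cylinders, so I may pick an extension where it still holds, adjusting $\delta$. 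Now $u$ and $v$ have equal length and a common terminal symbol $0$, and $\m([u])/\m([v])=\kappa^{s}$ for some $s\in\bz$.

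Next I compensate with suffixes of equal length. The two-letter blocks $00$ and $10$ appended after a terminal $0$ have weights $p^2$ and $q=\kappa p^2$, so they differ exactly by the factor $\kappa$. Appending $|s|$ blocks from $\{00,10\}$ to both words, with the numbers of $10$ blocks chosen to differ by $|s|$ on the appropriate side, gives $u'=uw$ and $v'=vw'$ of equal length, ending at $0$, with $\m([u'])=\m([v'])$. The map $T:[u']\to[v']$, $u'z\mapsto v'z$, then has graph in $\sR_{\mathrm{mod}}$ and Radon--Nikod\'ym derivative $1$.

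The main obstacle is to keep the density estimates alive after passing to the sub-cylinders $[u']$ and $[v']$. Their relative measures $\m([u'])/\m([u])$ are fixed constants (products of powers of $p^2$ and $q$), but these may be small. So I would choose $\delta$ only after fixing $s$; $s$ depends on $u,v$, so this needs care. Alternatively, I would use Lebesgue density along the refining filtration to choose $u,v$ so long that density exceeds $1-\delta'$ on all their extensions of bounded length. Once this is done, $\m(\sA\cap T^{-1}(\sB))>0$, so $\sR_{\mathrm{mod}}$ is ergodic and $\cam^{\w}$ is a factor. Being finite, hyperfinite and infinite-dimensional (the blocks $E_n(j,r)$ grow in size), it is the $\mathrm{II}_1$ HFF.
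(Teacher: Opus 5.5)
Part (i) is correct. Exhibiting the central projection $e^{(0)}_{0,0}\leftrightarrow\chi_{[0]}$ of $\cam^{\w}=\bigl(\bigcup_n\cc_n\bigr)''$ is the algebraic form of the paper's observation that $\sX_0$ is $\sR_{\mathrm{mod}}$-invariant. Part (ii), however, has a genuine gap exactly at the step you flag, and neither suggested fix closes it.

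To absorb the discrepancy $\kappa^{s}$ you pass to sub-cylinders $[uw]\subset[u]$ and $[vw']\subset[v]$ with $|w|=|w'|=2|s|$. Their relative measures can be as small as $\min\{p^2,q\}^{|s|}$, and density $1-\delta$ on $[u]$ gives nothing on such a sub-cylinder unless $\delta$ is of that order. But $s$ is determined by $u,v$, which are chosen only after $\delta$. Moreover $|s|$ has no a priori bound: for typical $x\in\sA$, $y\in\sB$ the difference $m(x_0\cdots x_n)-m(y_0\cdots y_n)$ fluctuates on the scale $\sqrt n$, while $n\to\infty$ as $\delta\to0$. Your second alternative is circular in the same way. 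Density $>1-\delta'$ on \emph{all} extensions of length at most $K$ can only be forced by taking $\delta<\min\{p,q\}^K\delta'$ with $K$ fixed beforehand. Yet you need $K\ge 2|s|$, and $s$ is known only afterwards. So, as written, you never obtain a measure-preserving partial isomorphism in $\sR_{\mathrm{mod}}$ carrying a positive-measure part of $\sA$ into $\sB$, and ergodicity is not proved.

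The paper sidesteps density estimates at this point. On $\sX_0$ it codes paths by return blocks $a=0$, $b=10$, under which $\mu(\cdot\,|\,\sX_0)$ becomes the Bernoulli $(p,q)$ measure. It then uses the exactly weight-preserving swaps $S_u(u010z)=u100z$ (weights $pq=qp$, no hypothesis on $p$ needed), which realize all finite block permutations inside $\sR_{\mathrm{mod}}$. The Hewitt--Savage law gives ergodicity of $\sR_{\mathrm{mod}}|_{\sX_0}$, the map $\theta$ transfers it to $\sX_1$, and a single \emph{fixed} pair of equal-length, equal-weight words starting at $0$ and at $1$ glues the two halves; this is where $p=\kappa^\ell$ is used.

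If you prefer to keep a direct argument, two repairs work.
\begin{enumerate}
\item[(a)] For $p=\kappa^\ell$ the cocycle $D$ is $\kappa^{\bz}$-valued on all of $\sR$, and $\kappa\in\rK(\sR,\mu)$ by Proposition~\ref{ratiosetprop4}. By Schmidt's essential-range criterion, the skew product of the ergodic relation $\sR$ by $\log_\kappa D$ on $\sX\times\bz$ is therefore ergodic, and $\sR_{\mathrm{mod}}$ is its restriction to $\sX\times\{0\}$.
\item[(b)] Fix $\delta$ depending only on $p$. Once $s$ is known, take extensions of length $L\gg s^2$, choosing $w,w'$ ending at $0$ with $m(w')-m(w)=s$ among the extensions on which the density exceeds $1-\sqrt\delta$. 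By Markov's inequality these carry conditional mass at least $1-\sqrt\delta$, uniformly in $L$. This route requires a local-limit-type estimate showing that the law of the number of $01$-transitions in $L$ steps is asymptotically invariant under a shift by $s$.
\end{enumerate}
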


\begin{proof}
Set
\[
\sR_{\mathrm{mod}}
:=
\{(x,y)\in \sR:\ D(x,y)=1\}.
\]
By Proposition~\ref{centralizer-description4}, the centralizer
\(\cam^{\w}\) is finite and hyperfinite. Moreover, as observed above,
\(\cam^{\w}\) is the Feldman--Moore algebra of the measured equivalence
relation \(\sR_{\mathrm{mod}}\). Therefore, \(\cam^{\w}\) is a factor{, i.e.},
\(\sR_{\mathrm{mod}}\) is ergodic.

Let
\[
\sX_0:=[0],\qquad \sX_1:=[1].
\]

{(i)} Assume that \(p\notin \kappa^{\mathbb Z}\).
Let \((x,y)\in \sR_{\mathrm{mod}}\). Choose \(N\geq 0\) such that
\(x_k=y_k\) for all \(k\geq N\), and let
\[
\xi=(x_0,\dots,x_N),\qquad \eta=(y_0,\dots,y_N).
\]
Then \(\xi\) and \(\eta\) have the same terminal symbol, and since
\((x,y)\in \sR_{\mathrm{mod}}\), we have
\[
1=D(x,y)=\frac{\mu([\xi])}{\mu([\eta])}.
\]
By Lemma~\ref{weightformula-centralizer4},
\[
\frac{\mu([\xi])}{\mu([\eta])}
=
p^{\,\delta_{\xi_0,1}-\delta_{\eta_0,1}}
\kappa^{\,m(\xi)-m(\eta)+\delta_{\xi_0,1}-\delta_{\eta_0,1}}.
\]
If \(\xi_0\ne \eta_0\), then {\(1=p^{\pm1}\kappa^m\)}
for some \(m\in\mathbb Z\), hence \(p\in\kappa^{\mathbb Z}\), contrary to the
assumption. Therefore, \(\xi_0=\eta_0\), {i.e.,} \(x_0=y_0\). We have shown that
every \(\sR_{\mathrm{mod}}\)-class is contained either in \(\sX_0\) or in
\(\sX_1\). Since both \(\sX_0\) and \(\sX_1\) have positive \(\mu\)-measure, the
relation \(\sR_{\mathrm{mod}}\) is not ergodic. Hence
\(\cam^{\w}\) is not a factor.

(ii) Now, let us assume \(p=\kappa^\ell\) for some \(\ell\in\mathbb Z\). {First,}
we establish that
\(\sR_{\mathrm{mod}}|_{\sX_0}\) is ergodic.

{In fact, every} \(x\in \sX_0\) admits a unique decomposition into successive return blocks from one occurrence {of symbol} \(0\) to the next: {
\[x=(0,\beta_1,\beta_2,\dots),\;\;\beta_n\in\{a,b\},\quad\hbox{where}\;\;a:=0,\;b:=10.
\]
To} make the {encoding} precise, define the successive return times to the symbol
\(0\) by
\[
T_0(x):=0,
\qquad
T_{n+1}(x):=\inf\{k>T_n(x): x_k=0\},
\qquad x\in \sX_0.
\]
Since the only admissible transitions are \(00\), \(01\), and \(10\), one has
\[
T_{n+1}(x)-T_n(x)\in\{1,2\}
\qquad (n\geq 0).
\]
We therefore encode each return block by{
\[\beta_{n+1}(x):=\begin{cases}
a, & T_{n+1}(x)-T_n(x)=1,\\[1mm]
b, & T_{n+1}(x)-T_n(x)=2,\end{cases}\;\;\hbox{where}\;\;a:=0,\;b:=10.\]}
Thus every \(x\in \sX_0\) is written uniquely as {\(
x=(0,\widetilde\beta_1,\widetilde\beta_2,\dots)\),
where \(\widetilde a=0\) and \(\widetilde b=10\).} This defines a bijection
\[
\Phi:\sX_0\to \{a,b\}^{\mathbb N},
\qquad
\Phi(x)=(\beta_1(x),\beta_2(x),\dots).
\]

To see that \(\Phi\) is measurable, let
\[
C(\beta_1,\dots,\beta_m)
:=
\{\gamma\in\{a,b\}^{\mathbb N}:\gamma_1=\beta_1,\dots,\gamma_m=\beta_m\}
\]
be a cylinder in \(\{a,b\}^{\mathbb N}\), and let \(w(\beta_1,\dots,\beta_m)\) be the
admissible word obtained by concatenating { \(0,\widetilde\beta_1,\dots,
\widetilde\beta_m\). Then \(\Phi^{-1}\bigl(C(\beta_1,\dots,\beta_m)\bigr)=[w(\beta_1,\dots,\beta_m)]\), hence both \(\Phi\) and \(\Phi^{-1}\) are measurable.}

Now let
\[
N_a:=\#\{1\le j\le m:\beta_j=a\},
\qquad
N_b:=\#\{1\le j\le m:\beta_j=b\}.
\]
By the cylinder formula for the Markov measure \(\mu\), {
\(\mu\bigl([w(\beta_1,\dots,\beta_m)]\bigr)
=\mu([0])\,p^{N_a}q^{N_b}\).}
Indeed, each block \(a\) contributes one transition \(0\to 0\), which has weight \(p\),
while each block \(b\) contributes the two-step excursion \(0\to 1\to 0\), which has
weight \(q\cdot 1=q\). Therefore{,
\begin{align*}
\mu\Bigl(\Phi^{-1}\bigl(C(\beta_1,\dots,\beta_m)\bigr)\,\Big|\,\sX_0\Bigr)
&=
\frac{\mu([w(\beta_1,\dots,\beta_m)])}{\mu([0])}=
p^{N_a}q^{N_b}=
\prod_{j=1}^m \rho(\beta_j),\end{align*}
where \(\rho(a)=p\), \(\rho(b)=q\). Since the} cylinders generate the product \(\sigma\)-algebra
on \(\{a,b\}^{\mathbb N}\), it follows that
\(\nu:=\Phi_*\bigl(\mu(\,\cdot\,|\,\sX_0)\bigr)
=\onwl{\otimes}\limits_{n=1}^\infty (p\,\delta_a+q\,\delta_b)\).
In other words, under encoding \(\Phi\), the conditional measure
\(\mu(\,\cdot\,|\,\sX_0)\) becomes the $(p,q)$ Bernoulli product-measure.

Let \(u\) be any admissible word ending at \(0\). Define
\[
S_u:[u010]\to [u100],
\qquad
S_u(u010z)=u100z.
\]
Since {\(\mu([u010])=\mu([u])\,p\,q\) and \(\mu([u100])=\mu([u])\,q\,p\),}
the Radon--Nikod\'ym derivative of \(S_u\) is equal to \(1\). Hence the graph of
\(S_u\) is contained in \(\sR_{\mathrm{mod}}|_{\sX_0}\). In terms {of
encoding} \(\Phi\), this is exactly the adjacent transposition of the two blocks
\(a\) and \(b\). Therefore \(\sR_{\mathrm{mod}}|_{\sX_0}\) contains all
finite permutations of the block coordinates.

Now let \(\sA\subset \sX_0\) be \(\sR_{\mathrm{mod}}|_{\sX_0}\)-invariant.
Then \(\Phi(\sA)\subset \{a,b\}^{\mathbb N}\) is invariant under all finite
permutations of coordinates. By the Hewitt--Savage zero--one law \cite{HS}, {
\(\nu(\Phi(\sA))\in\{0,1\}\) {$\Llra$} $\mu(\sA\,|\,\sX_0)\in\{0,1\}$.}
Hence \(\sR_{\mathrm{mod}}|_{\sX_0}\) is ergodic.

Further, consider  \(\sR_{\mathrm{mod}}|_{\sX_1}\). Every \(x\in \sX_1\) has the form
\(x=(1,0,x_2,x_3,\dots)\). Define
\[
\theta:\sX_1\to \sX_0,
\qquad
\theta(1,0,x_2,x_3,\dots)=(0,x_2,x_3,\dots).
\]
This is a measure-space isomorphism from
\((\sX_1,\mu(\,\cdot\,|\sX_1))\) onto \((\sX_0,\mu(\,\cdot\,|\sX_0))\).
Moreover, {as the initial block \(10\) is common on both sides,
\((x,y)\in \sR_{\mathrm{mod}}|_{\sX_1}\) $\Llra$
\((\theta(x),\theta(y))\in \sR_{\mathrm{mod}}|_{\sX_0}\).}
Thus \(\sR_{\mathrm{mod}}|_{\sX_1}\) is isomorphic to
\(\sR_{\mathrm{mod}}|_{\sX_0}\), and therefore it is ergodic.

We distinguish two cases.

\smallskip
\noindent
{\it Case 1: \(\ell\geq 0\).}
Set
\[
u:=(01)^{\ell+1}0,
\qquad
v:=10\,0^{\,2\ell+1}.
\]
Then \(u\) and \(v\) are admissible words of the same length, both end at \(0\),
\(u\) starts at \(0\), and \(v\) starts at \(1\). Moreover, {\(m(u)=\ell+1\) and \(m(v)=0\).}
If \(n\) denotes their common length minus \(1\), then
Lemma~\ref{weightformula-centralizer4} gives {
\(\mu([u])=\pi_0\,p^n\,\kappa^{\ell+1}\) and \(
\mu([v])=\pi_0\,p^{n+1}\kappa\).}
Since \(p=\kappa^\ell\), these two quantities are equal: {\(\mu([u])=\mu([v])\).}

\smallskip
\noindent
{\it Case 2: \(\ell<0\).}
Write \(\ell=-r\) with \(r\geq 1\), and set
\[
u:=0^{\,2r},
\qquad
v:=1(01)^{r-1}0.
\]
Again \(u\) and \(v\) are admissible words of the same length, both end at \(0\),
\(u\) starts at \(0\), and \(v\) starts at \(1\). Moreover, {\(m(u)=0\), and \(m(v)=r-1\).}
If \(n\) denotes their common length minus \(1\), then
Lemma~\ref{weightformula-centralizer4} gives
\(\mu([u])=\pi_0\,p^n\) and \(\mu([v])=\pi_0\,p^{n+1}\kappa^{r}\).
Since \(p=\kappa^{-r}\), we again obtain \(\mu([u])=\mu([v])\).

In both cases, define {\(T:[v]\to [u]\), \(T(vz)=uz\).} The graph of $T$ is contained
in \(\sR\), and because
\(\mu([u])=\mu([v])\), its Radon--Nikod\'ym derivative is equal to \(1\).
Hence the graph of \(T\) is contained in \(\sR_{\mathrm{mod}}\). Thus
\(\sR_{\mathrm{mod}}\) connects a positive-measure subset of \(\sX_1\) to a
positive-measure subset of \(\sX_0\).

\smallskip
Now, {let us show that \(\sR_{\mathrm{mod}}\) is ergodic.
Let \(\sA\subset \sX\) be \(\sR_{\mathrm{mod}}\)-invariant. By the above steps,
each of \(\sA\cap \sX_0\), \(\sA\cap \sX_1\)
has either the zero or the} full relative measure in \(\sX_0\) and \(\sX_1\), respectively.

Suppose, for contradiction, that {\(\mu(\sA\cap \sX_0\,|\,\sX_0)=1\) and \(\mu(\sA\cap \sX_1\,|\,\sX_1)=0\).
Then \(\sA\cap [u]\) has the} full measure in \([u]\). Since the graph of \(T\) is
contained in \(\sR_{\mathrm{mod}}\) and \(\sA\) is
\(\sR_{\mathrm{mod}}\)-invariant, we have that
\(T^{-1}(\sA\cap [u])\subset \sA\cap [v]\).
But \(T\) has { the} Radon--Nikod\'ym derivative \(1\), so \(T^{-1}(\sA\cap [u])\) has
{the} full
measure in \([v]\). Hence \(\sA\cap [v]\) has {a} positive measure, contradicting
\(\mu(\sA\cap \sX_1\,|\,\sX_1)=0\). The opposite mixed case is excluded in the same
way. Therefore, {\(\sA\) is either null or conull,} so
\(\sR_{\mathrm{mod}}\) is ergodic.

We have proved that \(\sR_{\mathrm{mod}}\) is ergodic {iff
\(p\in\kappa^{\mathbb Z}\), so} \(\cam^{\w}\) is a factor exactly in that
case.

Finally, under the assumption \(p=\kappa^\ell\), Proposition~\ref{centralizer-description4}
shows that \(\cam^{\w}\) is finite and hyperfinite. It is also
infinite-dimensional, because it contains all diagonal cylinder projections
\(e^{(n)}_{\xi,\xi}\). Therefore \(\cam^{\w}\) is an HFF of type {II$_1$.}
\end{proof}

\begin{cor}\label{flowweights-explicit4}
Let $F(\cam)$ be the Connes--Takesaki flow of weights of $\cam$.
\begin{enumerate}
\item[\rm (i)]
If $\kappa=1$, then $\cam$ is of type $\mathrm{II}_1$, and the flow of weights is trivial.

\item[\rm (ii)]
If $\kappa\ne 1$, set {\(T:=|\log\lambda|=|\log\kappa|\).}
Then $F(\cam)$ is the periodic translation flow
\[
\br\curvearrowright \br/T\bz,
\quad
s\cdot(t+T\bz)=t+s+T\bz.
\]
Equivalently, \(Z\!\left(\cam\rtimes_{\s^{\w}}\br\right)\cong L^\infty(\br/T\bz)\).
\end{enumerate}
\end{cor}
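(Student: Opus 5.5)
Part (i) is immediate from Theorem~\ref{typeclassification4}(ii). When $\kappa=1$ the factor $\cam$ is of type $\mathrm{II}_1$ and carries the tracial faithful normal state induced by the invariant measure $\n$, with $\rd\n=h^{-1}\rd\m$. The modular group of a trace is trivial, so $\cam\rtimes_{\s}\br\cong\cam\bar\otimes L^\infty(\br)$ with the dual action acting by translation on the second factor. Its centre is therefore $L^\infty(\br)$ with the translation action, and this is the trivial (transitive) flow. It corresponds to the semifinite case, which is what ``trivial flow of weights'' means.

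For (ii) I would not go through Connes--Takesaki duality in the abstract. I would use the standard correspondence for factors with a periodic state. By Proposition~\ref{modform4} and Lemma~\ref{weightformula-centralizer4}, every ratio $\m([\xi])/\m([\eta])$ with $\xi_n=\eta_n$ lies in the group generated by $p$ and $\kappa$. In general this is not contained in $\kappa^{\bz}$, so $\w$ itself need not be $T$-periodic. I would therefore first pass to a periodic weight.

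Take $\psi=\w(h_0^{-1}\,\cdot)$ with $h_0=p^{\d_{x_0,1}}\in\cd$, which lies in the centralizer. Its Radon--Nikod\'ym cocycle is
\[
D_\psi(x,y)=D(x,y)\,h_0(y)/h_0(x)=\kappa^{m(\xi)-m(\eta)+\d_{\xi_0,1}-\d_{\eta_0,1}}\in\kappa^{\bz}.
\]
Hence $\s^\psi_{2\pi/T}=\mathrm{id}$, so $\psi$ is a faithful normal periodic state with period $2\pi/T$.

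Next I would show that the centralizer $\cam^\psi$ is a factor. This is the analogue of Theorem~\ref{centralizer-factor4}(ii) for the relation $\{D_\psi=1\}$. The argument carries over: the block transposition maps $S_u$ and the Hewitt--Savage zero--one law give ergodicity on $\sX_0$ and on $\sX_1$. To link $\sX_0$ and $\sX_1$, I would use words $u$ and $v$ with equal $D_\psi$-weight, for instance $u=0100$ and $v=1000$. This is the step I expect to be the main obstacle, because one has to check carefully that the modified cocycle admits such a connecting pair. If it does not, I would instead appeal to Connes' result that a $\mathrm{III}_\lambda$ factor admits a periodic faithful normal state with factorial centralizer.

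Given this, Connes' structure theory for $\mathrm{III}_\lambda$ factors applies: $\cam\rtimes_{\s^\psi}\br\cong(\cam^\psi\bar\otimes\cb(\ell^2))\rtimes\bz$ together with a trace-scaling part. The centre of the continuous core is $L^\infty(\br/T\bz)$, and the dual action acts on it by translation. Equivalently, the flow of weights of a $\mathrm{III}_\lambda$ factor is the periodic flow of period $-\log\lambda$, and $T=|\log\lambda|$ by Theorem~\ref{typeclassification4}(i).
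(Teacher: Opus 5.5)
Your proposal is correct. Its final sentence is exactly the paper's proof: the paper combines Theorem~\ref{typeclassification4} with the standard fact that a type $\mathrm{III}_\lambda$ factor, $0<\lambda<1$, has as its flow of weights the translation flow on $\br/|\log\lambda|\bz$. For (i) the paper only invokes the type $\mathrm{II}_1$ conclusion, and your computation $\cam\rtimes_{\s}\br\cong\cam\bar\otimes L^\infty(\br)$ with translation is a correct way of making ``trivial'' precise.

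What precedes that sentence in (ii) is a genuinely different, more hands-on route: you construct Connes' generalized trace explicitly. It works, and the step you flagged is not an obstacle.
\begin{itemize}
\item Your $h_0$ is precisely the function $h$ of \eqref{coboundary4}. So $\psi=\w(h_0^{-1}\,\cdot\,)$ is the trace when $\kappa=1$, and for $\kappa\ne1$ its cocycle lies in $\kappa^{\bz}$, as you computed.
\item For prefixes with a common terminal symbol, $D_\psi=1$ if and only if $m(\xi)+\d_{\xi_0,1}=m(\eta)+\d_{\eta_0,1}$. Your pair $u=0100$, $v=1000$ gives the value $1$ on both sides. So does the pair $010$, $100$, which is Case~1 of Theorem~\ref{centralizer-factor4}(ii) with formal $\ell=0$.
\item Since $h_0$ is constant on $\sX_0$ and on $\sX_1$, the block-transposition and Hewitt--Savage arguments carry over verbatim.
\end{itemize}
This route buys an explicit periodic faithful normal functional with factorial centralizer even when $p\notin\kappa^{\bz}$. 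In that case Theorem~\ref{centralizer-factor4}(i) shows $\cam^\w$ itself is not a factor. The cost is that the detour is logically redundant: you still finish by quoting Connes' structure theory, which already contains the conclusion. Note also that $\psi(\one)=\pi_0+\pi_1/p$, so $\psi$ must be normalized to be a state.

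One statement in the detour needs correcting. The algebra $(\cam^\psi\bar\otimes\cb(\ell^2))\rtimes\bz$ is the discrete decomposition of $\cam$ itself (up to stabilization), hence a factor. It is not the continuous core $\cam\rtimes_{\s^\psi}\br$; if the two were isomorphic, the core would have trivial centre. Put $\cn=\cam^\psi\bar\otimes\cb(\ell^2)$ and let $\theta$ be the trace-scaling automorphism. Then the core is $(\cn\bar\otimes L^\infty(\br))\rtimes_{\theta\otimes\tau_T}\bz$, where $\tau_T$ is translation by $T$. Since $\bz$ acts freely on $\br$, its centre is $L^\infty(\br)^{\bz}=L^\infty(\br/T\bz)$, and the dual action on it is translation. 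That is the claim you want.
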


\begin{proof}
If $\kappa=1$, then Theorem~\ref{typeclassification4} shows that $\cam$ is of type
$\mathrm{II}_1$, so the flow of weights is trivial. { Further, by Theorem~\ref{typeclassification4}, for $\kappa\ne 1$ factor $\cam$ is of type
$\mathrm{III}_{\lambda}$ with $0<\lambda<1$. For such a factor, the flow}
of weights is the periodic translation flow on {$\br/(|\log\lambda|\bz)$}; see, e.g.,
\cite[Ch.~9]{S}. Since $|\log\lambda|=|\log\kappa|$, the claim follows.
\end{proof}

\section*{Acknowledgments} FM thanks Y.~Suhov for kind hospitality at DPMMS,
University of Cambridge. YS thanks IHES, Bures-sur-Yvette, for hospitality and support.

\section*{Declaration}{The author declare that they have no conflict of~interests.}

\section*{Availability of data and material} Not applicable.

\end{document}